\documentclass[12pt, letterpaper]{amsart} 
\usepackage{amsmath}         
\usepackage{amsfonts} 
\usepackage{graphicx} 
\usepackage{mathrsfs}
\usepackage{float}
\usepackage{bm}
\usepackage{amsmath,amstext,amsthm,amssymb,amsxtra, verbatim}
\usepackage[top=1.5in, bottom=1.5in, left=1.25in, right=1.25in]    {geometry}

\numberwithin{equation}{section}
\usepackage{tikz}
\usetikzlibrary{arrows}
\usepackage{mathtools}

\usepackage{amssymb}

\newtheorem{theorem}{Theorem}[section]
\newtheorem{lemma}[theorem]{Lemma}
\newtheorem{prop}[theorem]{Proposition}
\newtheorem{corollary}[theorem]{Corollary}
\theoremstyle{definition}
\newtheorem{definition}[theorem]{Definition} 
\newtheorem{exmp}[theorem]{Example}
\numberwithin{equation}{theorem}

\newcommand{\zz}{\mathbb{Z}}

\newcommand{\rr}{\mathbb{R}}
\newcommand{\cc}{\mathbb{C}}

\newcommand{\Flat}{\mathrm{Flat}} 
\newcommand{\rank}{\mathrm{rank}}

\title[Large Models]{Identifiability of Large Phylogenetic Mixtures for Many Phylogenetic Model Structures}
\author{Bryson Kagy and Seth Sullivant}
\email{bgkagy@ncsu.edu, smsulli2@ncsu.edu}
\address{ }

\begin{document}

\maketitle

\begin{abstract}

Identifiability of phylogenetic models is a necessary condition
to ensure that the model parameters can be uniquely determined from data.
Mixture models are phylogenetic models where the probability
distributions in the model are convex combinations of distributions
in simpler phylogenetic models.  Mixture models are used to model
heterogeneity in the substitution process in DNA sequences.  While many basic phylogenetic models are known to
be identifiable, mixture models in generality have only been
shown to be identifiable in certain cases. 
We expand 
the main theorem of [Rhodes, Sullivant 2012] to prove identifiability
of mixture models in equivariant phylogenetic models,
specifically the Jukes-Cantor, Kimura 2-parameter model, Kimura 3-parameter model  and the Strand Symmetric model. 
\end{abstract}
\makeatletter
\@setabstract
\makeatother

\section{Introduction} 
Phylogenetic models represent the evolutionary relationship among a collection
of taxa (short for taxonomic unit, which might be genes, species, or some other level of biological classification). 
Basic phylogenetic models are constructed by starting with a rooted tree where the taxa are vertices.
Taxa closer to the root correspond to more ancestral units in the evolutionary history. Associated to each edge in the tree is a transition matrix.  Each transition matrix encodes the  substitution probabilities
of the resulting 
characters along that edge.   Specifying a substitution model  amounts
to placing restrictions on the structure of the transition matrices.

In the General Markov model  no assumptions are made on the entries of the transition matrices,
whereas in the Jukes-Cantor model all off diagonal entries are equal so the transition matrices are highly constrained.  Models like the Kimura 2-parameter, Kimura 3-paramater, and the strand symmetric model sit 
in between these two extremes.

When using any statistical  model in practice,  it is desirable if a distribution arising from the model
uniquely determine the parameters that produced in, 
i.e.~the model should be \emph{identifiable}.
For basic phylogenetic models, these parameters are the tree parameter 
and the transition matrices.  The tree parameter is a discrete
parameter, in that there are only finitely many different
trees on $n$ leaves.  On the other hand, the space of possible
transition matrices forms a continuous parameter space.  Questions
of identifiability concern both the uniqueness of the discrete
parameters and the continuous parameters as determined from 
probability distributions produced by the model.
In this paper, by identifiable  we mean \emph{generically identifiable}, that is, the parameters that produce a distribution are identifiable
except possibly for a low dimensional subset of the parameter space. 
All the basic phylogenetic models are known to be generically identifiable
by classic results in the literature \cite{CHANG199651}.

This paper concerns the identifiability of mixture models, which are more complex models
which use the basic phylogenetic models on trees as building blocks.  
In a mixture model, we do not assume that all characters evolve according to the same underlying
phylogenetic tree with a fixed set of parameters.  Rather there are (hidden) classes, and depending on which class a character belongs to,
it evolves under the (basic) phylogenetic model according to that class.  Classes might correspond to
sites in DNA sequences that evolve at different rates (fast or slow).  Across larger regions of DNA, classes
could correspond to different genes which evolve according to different trees altogether.  
Mixture models have been used in situations where the precise nature
of heterogeneity across sites in the data is unclear \cite{Huelsenbeck2007}. 
Also, general mixture models
often contain phylogenetic network models as submodels.  For example, the displayed
tree model on a network (also called the network Markov model), is a submodel of a 
related mixture model, which can be useful for proving identifiability results about
phylogenetic network models, as in \cite{Englander2025}.

As mixture models have many more parameters than the basic phylogenetic models, questions of identifiability
become increasingly more difficult to answer and their analysis requires more advanced tools.
Limited work has been done on the identifiability  in group-based mixture models for specific numbers of classes and for completely general
tree structures.  For example, identifiability is known for 
2 class mixtures for the Jukes-Cantor and Kimura 2-parameter models \cite{Allman2011} for the Cavender-Farris-Neyman model and the Kimura 3-parameter model \cite{Hollering2021}.   For 3 class mixtures identifiability is only known for the Jukes-Cantor model \cite{Long2015}.  
 Work has also been done on the identifiability  of other types of mixture models such as Profile Mixture Models \cite{Yourdkhani2021},
 which represent a useful submodel of general mixture models.

Probably the strongest result on identifiability of mixture models,
the main theorem in \cite{Rhodes2012} gives conditions under which mixture models under the General Markov model are identifiable.   
While this result does not allow for arbitrary mixtures with completely
unrelated sets of trees, 
it does cover a large class of sets of trees, including the same tree mixtures which are the mostly commonly occurring.  In most settings,
these results allow for identifiability in a number of mixture classes
well above any that could be used in practice.  

All the results mentioned so far consider generic
identifiability, which, while a necessary consideration to use for mixture models, does not typically  produce
identifiability results for submodels.  Hence, the main results of \cite{Rhodes2012} only apply to the General Markov model,
and not to any of the submodels: Jukes-Cantor (JC), Kimura 2 parameter model (K2P), Kimura 3 parameter model (K3P) and the Strand Symmetric model (SSM). These submodels are called \emph{equivariant} because they have an underlying model symmetry to the assumptions they make on the transition matrices.
The goal of this paper is to broaden the tools developed from the main theorem in \cite{Rhodes2012} to 
also deduce the identifiability of those important submodels. 

The outline of this paper is as follows.
Before continuing onto the rest of the paper, in the next section we give some important definitions necessary to state the main results of this paper.  Theorem \ref{thm:main} gives a general framework for proving identifiability
results for phylogenetic mixture models and Corollary \ref{cor:JCetc}
establishes that the JC, K2P, K3P, and SSM all satisfy those conditions.
In Section \ref{trees and splits}, background on phylogenetic trees and the Markov, group-based, and equivariant models is given.   In Section \ref{Fourier}, the discrete Fourier transform is introduced, which will be useful to determine the rank of certain
matrices in our computations.
In Section \ref{tensor}, Kruskal's theorem  is introduced which gives bounds on the rank of tensors and will be crucial in the proof of the main theorem.
The technical heart of the paper is Sections \ref{sec:maintheorem}, \ref{sec:rankproperty},
and \ref{sec:noshuffle}.  Specifically, in Section \ref{sec:maintheorem},
we prove our general identifiability result showing that if a Markov model $\mu$
satisfies four key properties, then the associated mixture models will
be generically identifiable.  Sections \ref{sec:rankproperty} and \ref{sec:noshuffle}
then prove that the  JC, K2P, K3P, and SSM models all satisfy these technical conditions
to derive our main result.


\section{Key Definitions and Statement of Main Results}

The goal of this section is to give the necessary definitions
so that we can state our main results.  
This section just covers definitions to help make sense of the results. Further definitions will appear in subsequent sections.

 Let $T$ be a single phylogenetic tree with $n$ leaves.  Let $\mu$ be a 
 \emph{Markov model}, (for example, JC, K2P, K3P, SSM, or GMM standing for
 Jukes-Cantor, Kimura 2-parameter, Kimura 3-parameter, Strand Symmetric model, or General Markov Model, respectively).
 Let $S^{\mu}_{T}$ be the parameter space of continuous parameters for the underlying model $\mu$ on tree $T$ (which consists of transition matrices
 for each edge, and a root distribution). 
 Let $\kappa$ be the number of states of the random variables being considered ($\kappa = 4$ is most common in this paper).
 Once we specify these parts of a model, 
 there exists a parametrization $\psi^\mu_{T}$ which  gives the joint distribution of states on the leaves of $T$ as a function of continuous parameters which specify the root distribution of the tree and all of the transition matrices on its edges.  That is, 
  \[
  \psi^{\mu}_{T} : S^{\mu}_{T}  \to \Delta^{\kappa^n-1} 
  \]
  where $\Delta^{\kappa^n-1} \subset [0,1]^{\kappa^n-1}$ is the probability simplex of non-negative real vectors summing to $1$. Then the image of $\psi^{\mu}_{T}$ is the phylogenetic model $M^{\mu}_{T}$.

Given a multiset of $r$ phylogenetic trees $T_1, \ldots, T_r$, a phylogenetic mixture model can be defined as the set of  convex combinations of distributions from $M^{\mu}_{T_1}$, $\ldots$, $M^{\mu}_{T_r}$.

\begin{definition} \label{mixmodel}
Let $\mathbf{T}= (T_1, \dots T_r)$ be a $r$-tuple of trees each with $n$ leaves.  
Let $S^{\mu}_{\mathbf{T}}= S^{\mu}_{T_1} \times \dots S^{\mu}_{T_r} \times \Delta^{r-1}$, each with underlying model $\mu$. 
Then, the parametrization map of the \emph{mixture model} is defined by
\[
\psi^{\mu}_{\mathbf{T}} : S^{\mu}_{\bf{T}} \to \Delta^{\kappa^n-1} 
\]
where 
\[
\psi^{\mu}_{\mathbf{T}}(s_1, \dots, s_r, \pi) = \pi_1\psi^{\mu}_{T_1}(s_1) + \dots + \pi_r\psi^{\mu}_{T_r}(s_r)
\] 
with $s_i \in S_{T_i}$ , and $\pi \in \Delta^{r-1}$ is the vector of mixing parameters. 
 \end{definition}

Since there is no order on the combination of trees in mixture model, identifiability on mixture model will have to be defined up to reordering of the trees.
 
\begin{definition}
The tree parameters of an $r$-tree mixture model on $ \mathbf{T} =(T_1, \dots, T_r)$ are generically identifiable if for  generic choices of 
$s_1, \dots, s_r$ and $\pi$ with $s_i \in S^\mu_{T_i}$ , and $\pi \in \Delta^{r-1}$,
\[
\psi^{\mu}_{\bf{T}}(s_1, \dots, s_r, \pi) = \psi^{\mu}_{\bf{T}'}(s'_1, \dots, s'_r, \pi') 
\] 
implies that there is a permutation $\sigma \in \mathfrak{S}_r$ (the Symmetric group on $r$ letters)  
such that $\sigma \cdot \bf{T} = \bf{T}'$. 
\end{definition}

\begin{definition}
The continuous parameters of an $r$-tree mixture model on $ \mathbf{T} =(T_1, \dots, T_r)$ are \emph{generically identifiable} if for  generic choices of 
$s_1, \dots, s_r$ and $\pi$ with 
$s_i \in S^{\mu}_{T_i}$ , and $\pi \in \Delta^{r-1}$,
\[
\psi^{\mu}_{\mathbf{T}}(s_1, \dots, s_r, \pi) = \psi^{\mu}_{\mathbf{T}}(s'_1, \dots, s'_r, \pi') 
\] 
implies that there is a permutation $\sigma \in S_r$ such that $\sigma \cdot \mathbf{T} = \mathbf{T}, s'_i = s_{\sigma(i)}$ , and $\pi'_i = \pi_{\sigma(i)}$ for $i\in [r]$.
\end{definition}

In the identifiability of the continuous parameters of a phylogenetic
mixture model, we allow for label swapping, that permutes
parameters between two classes of the mixture model
that have the same underlying tree.  

We must define the class
of mixtures that we prove identifiability in.
This amounts restricting to a subset of $r$-tuples of trees.

\begin{definition}
    Let $\mathcal{T}(r,n,k)$ be the set of of $r$-tuples of $n$-leaf
    binary trees $(T_1, \dots, T_r)$ such that there exists
    a tripartition of the leaves $A|B|C$  and for each $i \in [r]$
    a vertex $v_i$ in tree $T_i$ such that the induced triparition of the
    leaves in $T_i$ induced by $v_i$ is $A|B|C$.  Furthermore, we
    assume that $\#A \geq \#B \geq \#C$ and $\#B \geq k$.    
\end{definition}

We now state our main result, which depends on some technical 
definitions which will appear in later sections.
However, this gives us the flavor of the results that can be achieved
with these methods, and the fact that certain specific
properties need to be proved for a model $\mu$ to deduce
identifiability for a mixture model.  These properties are
\emph{standard},  \emph{the rank property}, \emph{the extended rank property},
and \emph{the No Shuffling Property}.

 \begin{theorem}  \label{thm:main}
    Suppose that a phylogenetic model $\mu$ on $\kappa> 2$ states
        satisfies the following properties:
    \begin{enumerate}
    \item  $\mu$ is standard,
    \item  $\mu$ has the rank property $RP(r,k)$
    \item  $\mu$ has the extended rank property $ERP(r,k)$
         \item $\mu$ has the No Shuffling Property
         \item  $r \leq \kappa^{k -1}$, and 
         \item  $n \geq 2k + 1$.
    \end{enumerate}
     Then both tree parameters and the numerical parameters of $M^{\mu}_{\bf{T}  }$ are generically identifiable
     in the class of trees $\mathcal{T}(r,n,k)$.
 \end{theorem}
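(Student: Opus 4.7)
The plan is to adapt the Kruskal-decomposition strategy of \cite{Rhodes2012} to the equivariant setting, and then use the No Shuffling Property to unpack what Kruskal returns.  Fix a generic mixture distribution $P = \sum_{i=1}^r \pi_i P_i$ with $P_i = \psi^\mu_{T_i}(s_i)$, let $A|B|C$ be the common tripartition witnessing $\mathbf{T} \in \mathcal{T}(r,n,k)$, and let $v_i \in T_i$ be the vertex that induces it.  Conditioning on the $\kappa$ states at $v_i$ expresses each $P_i$ as a sum of $\kappa$ rank-one tensors along the $A|B|C$ flattening, so
\[
\Flat_{A|B|C}(P) = \sum_{i=1}^r \sum_{s=1}^{\kappa} \lambda_{i,s}\, \alpha_{i,s} \otimes \beta_{i,s} \otimes \gamma_{i,s},
\]
where $\alpha_{i,s}$, $\beta_{i,s}$, $\gamma_{i,s}$ are the conditional distributions on the leaves in $A$, $B$, $C$ given state $s$ at $v_i$, and $\lambda_{i,s}$ absorbs $\pi_i$ together with the marginal probability of state $s$ at $v_i$.

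Next I would invoke Kruskal's theorem to conclude that this $r\kappa$-term rank-one decomposition is unique up to the obvious scaling and permutation.  The Kruskal ranks $k_A$, $k_B$, $k_C$ of the three factor matrices assembled from $\{\alpha_{i,s}\}$, $\{\beta_{i,s}\}$, $\{\gamma_{i,s}\}$ must satisfy $k_A + k_B + k_C \geq 2 r \kappa + 2$.  Since $|A| \geq |B| \geq k$ and $r \leq \kappa^{k-1}$ (so $r\kappa \leq \kappa^{k}$), the rank property $RP(r,k)$ yields $k_A = k_B = r\kappa$, with standardness of $\mu$ ensuring that the generic conditional distributions obtained from $\psi^{\mu}_{T_i}$ fit the hypothesis of $RP$.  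The condition $n \geq 2k+1$ forces $|C| \geq 1$, and the extended rank property $ERP(r,k)$ then supplies the remaining $k_C \geq 2$ needed to close Kruskal's inequality.

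With Kruskal applied, the multiset of rank-one summands is recovered from $P$, but they still have to be repackaged into $r$ groups of $\kappa$, one per tree, to reconstitute each $P_i$.  This is exactly the role of the No Shuffling Property: any regrouping of the $r\kappa$ summands into a tensor lying in some $M^{\mu}_{T_j}$ must in fact come from a single original $P_i$, so the partition into trees is forced.  Once each $P_i$ is identified as a distribution in $M^{\mu}_{T_i}$, the generic identifiability of the basic phylogenetic model $\mu$ (via \cite{CHANG199651}-type arguments for $\mu$) recovers both the tree $T_i$ and the continuous parameters $s_i$, and the mixing weights $\pi_i$ are read off from the normalizations.

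The step I expect to be hardest is verifying the Kruskal hypothesis along the $C$ direction.  In the tight regime where $|C|$ is as small as $n - 2k$, the $C$-factor matrix is short and wide, with at most $\kappa^{|C|}$ rows and $r\kappa$ columns, and here the extended rank property must do work that $RP$ alone cannot.  The bulk of the subsequent sections should therefore be devoted to establishing $ERP(r,k)$ and the No Shuffling Property for JC, K2P, K3P, and SSM; the Fourier-transform apparatus of Section~\ref{Fourier} is the natural tool for controlling the ranks of these flattened equivariant matrices by block-diagonalizing them into character components.
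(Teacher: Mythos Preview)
Your Kruskal-plus-No-Shuffling core is correct and matches the paper's Lemma~\ref{lem:commontripart} almost verbatim, but there are two genuine gaps, and both stem from misidentifying where the extended rank property enters.

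First, you take the common tripartition $A|B|C$ as given, but identifiability requires recovering everything from $P$ alone.  The paper's opening move (Lemmas~\ref{lem:flatrank} and~\ref{lem:find2splits}) is to \emph{locate} a suitable tripartition by rank-testing $2$-way flattenings: a split $D|E$ with both sides of size $>k$ is common to all $T_i$ iff $\rank\Flat_{D|E}(P)\le\kappa r$.  The hard direction---that a split \emph{not} common to all trees forces rank $>\kappa r$---is precisely where $ERP(r,k)$ is invoked.  It is not used, as you propose, to supply $k_C\ge 2$; that bound comes from the \emph{standard} hypothesis (the two transition matrices with pairwise independent columns give Kruskal rank $\ge 2$ on any leaf block).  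So your attribution of roles to the hypotheses is swapped.

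Second, and more seriously, even the rank tests cannot guarantee finding the particular tripartition that witnesses $\mathbf{T}\in\mathcal{T}(r,n,k)$.  They only produce some $A|B|C$ with $\#A,\#B\ge k$ for which $A|B\cup C$ and $B|A\cup C$ are common splits; the third split $C|A\cup B$ cannot be tested when $\#C<k$, and the paper explicitly notes it need not hold.  For such an $A|B|C$ the triple-product representation $\Flat_{A|B|C}(P)=[M_A,M_B,M_C]$ with $r\kappa$ rows that your argument rests on is not available.  The paper's fix (Lemma~\ref{lem:findfrom2splits}) is to marginalize to $A\cup B\cup\{c\}$ for a single $c\in C$---now $A|B|\{c\}$ \emph{is} a common tripartition---apply Kruskal and No Shuffling there to recover $M_A$, and then hit $\Flat_{B\cup C|A}(P)$ with a right inverse $Q_A$ of $M_A$ to peel off the full components $\pi_iP_i$ one block at a time.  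This pseudo-inverse disentangling step is the idea missing from your outline.
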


 \begin{corollary}\label{cor:JCetc}
     Suppose that $\mu$ is the Jukes-Cantor model, Kimura-2 or 3-parameter model, 
     Strand symmetric model, or general Markov model.  Suppose that $n \geq 2k+1$
     and $r \leq 2^k - k$.  Then both tree parameters and the numerical parameters of $M^{\mu}_{\bf{T}  }$ are generically identifiable
     in the class of trees $\mathcal{T}(r,n,k)$.
 \end{corollary}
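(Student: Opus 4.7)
The plan is to deduce the corollary by verifying, for each model $\mu$ in the list, that all hypotheses of Theorem \ref{thm:main} hold. Since JC, K2P, K3P, SSM, and the standard DNA version of GMM all have $\kappa = 4$, hypothesis (5) asks for $r \leq 4^{k-1}$. A short induction shows $2^k - k \leq 4^{k-1}$ for all $k \geq 2$, so the assumption $r \leq 2^k - k$ implies (5) automatically, and hypothesis (6) is given. What remains is to check, for each $\mu$ separately, the four structural properties: being standard, $RP(r,k)$, $ERP(r,k)$, and the No Shuffling Property.

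For the standardness condition (1), I would appeal to the definition in Section \ref{trees and splits}: GMM is standard tautologically, and JC, K2P, K3P, SSM are all obtained from GMM by equivariance under a subgroup of $\mathfrak{S}_\kappa$, so standardness should follow from checking that the equivariance-preserving transition matrices contain a generic open subset of full-rank stochastic matrices. For the rank and extended rank properties (2) and (3), my approach for the group-based models JC, K2P, K3P is to pass to Fourier coordinates via Section \ref{Fourier}. In those coordinates each $N^\mu_{T_i}$ becomes sparse and indexed by characters of the underlying abelian group, and the stacked matrix $N$ can be shown to have full row rank by exhibiting an explicit $r\kappa \times r\kappa$ invertible submatrix. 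The precise bound $r \leq 2^k - k$ should emerge from this combinatorial count: out of the $2^k$ (or more) characters available on $k$ leaves, a small codimension-$k$ set must be excluded because of the zero pattern forced by the group symmetry. For SSM the argument is analogous using the partial block-diagonalization induced by strand symmetry ($\mathbb{Z}/2$-equivariance), and for GMM the rank property is immediate from a generic linear algebra count since transition matrices are unrestricted.

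For the No Shuffling Property (4), the strategy is an algebraic-geometric dimension count: a shuffled tensor $Q$ formed from $\kappa$ rows drawn across distinct trees lives in $M^{\mu}_{T_i}$ only on a proper closed subvariety of the parameter space, unless all chosen rows come from the same $P_i$. The tool is again the rank property applied to small flattenings at the chosen leaf, which forces compatibility relations between rows that generically cannot hold across different trees. Once all four properties are verified, Theorem \ref{thm:main} applies directly on $\mathcal{T}(r,n,k)$.

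The main obstacle I expect is establishing the rank property (and its extension) with the sharp bound $r \leq 2^k - k$ for the equivariant models. In Fourier coordinates, the equivariance collapses many modes to zero and the set of modes that \emph{can} contribute to rank depends intricately on the partition $A|B|C$ and the tree structure at $v_i$; one must argue that for generic parameters and generic trees satisfying the common-tripartition condition, enough independent Fourier modes survive to guarantee full row rank of $N$. The No Shuffling Property should then follow by reusing the Fourier picture together with a careful generic smoothness argument; verifying standardness and checking the arithmetic inequality $2^k - k \leq 4^{k-1}$ are the routine parts of the argument.
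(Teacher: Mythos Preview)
Your overall framework---verify hypotheses (1)--(4) of Theorem \ref{thm:main} for each model---matches the paper, but you miss the paper's central organizing idea: one only needs to establish the rank property, extended rank property, and No Shuffling Property for the Jukes--Cantor model. Because JC is a submodel of K2P, K3P, SSM, and GMM, and because each of these properties asserts that \emph{some} choice of parameters achieves a certain rank (or fails a certain invariant), a witnessing choice inside JC is automatically a witnessing choice in every supermodel. The paper's Lemmas \ref{lem:rpJC} and \ref{lem:erpJC} carry out the Fourier argument once, for JC only, and the corollaries for the other models are one line each. Your plan to redo the Fourier/block-diagonalization computation separately for K2P, K3P, SSM, and then a generic argument for GMM, would work but is strictly more effort and obscures the uniform mechanism.

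For the No Shuffling Property, your proposed ``dimension count'' with ``the rank property applied to small flattenings'' is too vague to constitute a proof, and it is not the paper's approach. The paper first reduces to $n=3$ by marginalization, then invokes a concrete family of phylogenetic invariants for the general Markov model---the commutation relations $P_{(i)}\,\mathrm{adj}(P_{(j)})\,P_{(k)} = P_{(k)}\,\mathrm{adj}(P_{(j)})\,P_{(i)}$ of Theorem \ref{thm:tensorinve}---and verifies by direct computation on random JC tensors that these invariants fail whenever the slices are drawn from distinct $P_i$. Since GMM invariants lie in the ideal of every submodel, this single JC computation again settles every model in the list. Your sketch does not identify any invariant or any mechanism by which ``compatibility relations between rows'' would fail generically; the rank property concerns linear independence of columns of a stacked matrix and does not by itself obstruct a shuffled $\kappa\times\kappa^{n-1}$ matrix from lying in some $M^\mu_T$.
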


 Part of our goal in these proofs is to try to give a uniform argument
 to deduce identifiability for these models.  A key component of the
 proof of Corolllary \ref{cor:JCetc} 
 is to prove that $\mu$ has the rank property $RP(r,k)$ and extended
 rank property $ERP(r,k)$ for the specific values of $r$ and $k$.   
To show this we prove both properties it for the Jukes-Cantor
 model for the particular values of $r$ and $k$ in the Theorem.  Since 
 Jukes-Cantor transition matrices appear in all the other models, this
 implies the rank property for the larger models for those
 values of $r$ and $k$.  It
 is probably possible to derive stronger results for the other models
 by more carefully analyzing the rank property and extended rank property in other
 models.  
 We discuss this in more detail in Section \ref{sec:rankproperty}.


 \section{Markov models on trees}\label{trees and splits}

In this section, we review background on the combinatorial structure
of trees, including splits and tripartitions associated to tree
edges and vertices.  We introduce the Markov models on
a tree that are the main object of study in this work.

A tree $T = (V, E)$ is a graph  that is connected and has no cycles. 
The degree of a vertex is the number of edges incident to the vertex.  A vertex of degree 1 is called a leaf.  
Phylogenetic trees  represent the evolution of taxa with the vertices each representing a different taxa. The closer a vertex is to the root vertex the older that taxa  is evolutionarily.  The leaves of the tree will represent taxa that are alive today. 
This gives the leaves special importance since the DNA of those species is observable. The following definitions of phylogenetic trees are standard (see \cite{Steel2016}).

 \begin{definition}
Let $T$ be a tree and $X$ be a set of labels.   Let $\phi$ be a map from $X$ to the vertices of $T$.
The pair $(T, \phi)$ is called a \emph{phylogenetic $X$-tree} if the image of $\phi$ is exactly the set of leaves of $T$ and $\phi$ is injective.

A \emph{binary phylogenetic X-tree} is a phylogenetic $X$-tree where all non-leaf vertices have degree 3.

A \emph{rooted tree} is a tree with a distinguished vertex called the root.  
\end{definition}

Removing any edge of a phylogenetic tree separates the leaf labels into two parts,
based on which connected component the leaf is in.  This leads to following definition.

  \begin{definition} \label{part}
A \emph{split} $A|B$ of $X$ is a partition of $X$ into two nonempty disjoint sets. 
A split is considered \emph{valid} for an $X$-tree $T$ if it is obtained by removing an edge from $T$ and taking $A$ and $B$ to be the two sets of leaf labels for the two component graphs.   

Similarly, a tripartition $A|B|C$ is obtained from a binary tree by removing an internal vertex and $A$ and $B$ and $C$  being the three sets of leaf labels for the three component graphs.
   \end{definition} 

 Let $T$ be a phylogenetic tree.  We associate a random variable $X_v$ to each
 vertex in $T$.  Each of the random variables $X_v$ has $\kappa$ states for
 a fixed value $\kappa$. In general terms, the state space of $X_v$ is $[\kappa] = \{1,2, \ldots \kappa\}$.   
  If $\kappa=4$ this alphabet can be thought of as DNA bases $\{A,C,G,T\}$. There are other values for $\kappa$ that are relevant biologically, such as  $\kappa=20$ for amino acids,  $\kappa=61$ for DNA codons, and  $\kappa=2$ for DNA where purines and pyrimidines are grouped together. This paper will be primarily focusing on $\kappa=4$. 
  
Before defining 
Markov models on a tree, we will motivate their definition.  
Consider for each vertex a sequence using the characters of the alphabet. 
If $\kappa=4$ that sequence on the vertex would be the DNA sequence for a gene and each vertex would represent a DNA sequence for a specific taxa. 
For simplicity, only point substitutions will be considered, i.e.~mutations that change one DNA letter to another. Thus each sequence will be of the same length for all
vertices in the graph, but possibly have different characters in each location. 
The Markov model on the tree is a model for how these sequences change over time.
If we assume that each site in the sequence evolves independently
under the same process, we can describe the model just by
assuming we have sequences of length $1$, or a single character.  
 Thus, for each vertex we will associate a random variable with $\kappa=4$ states where the internal non-leaf vertices will be hidden random variables (since they
 are unobserved).  

To each edge in the graph $i \rightarrow j$ is associated a 
Markov transition matrix $M^{ij}$, which is a $\kappa \times \kappa$
matrix whose $(x_i, x_j)$ entry is the conditional probability 
$P(X_j = x_j | X_i = x_i)$.  That is:
\[
M^{ij}(x_i, x_j)  = P(X_j = x_j | X_i = x_i).
\]
To be a Markov transition matrix $M^{ij}$, we require that all entries
are nonnegative, and that the row sums are all equal to one.  

Once a Markov transition matrix is specified for each edge, and a 
distribution of states $\pi \in \Delta_{r-1}$ is specified, we can write down
the joint distribution of all states in the model.  Let $p(x,y)$
denote the joint probability distribution of all random variables,
where $x$ corresponds to leaves and $y$ corresponds to
interior vertices.  
Let $\rho$ denote the root vertex.
Then
\[
p(x, y)  =  \pi(y_\rho) \prod_{i \rightarrow j \in E(T)}  M^{ij}(x_i,x_j).
\]
Since we are typically only interested in the distribution of states
at the leaves (because we do not have access to data from
the taxa at interior nodes), then we have the probability distribution
of interest is
\[
p(x)  =  \sum_{y \in [\kappa]^{\mathrm{Int}(T)}} p(x,y)
\]
where $\mathrm{Int}(T)$ denotes the interior vertices of $T$.

 \begin{figure}[t] \label{tree}

\begin{center}

\begin{tikzpicture} 
 
\node {a}
    child  {node {1} edge from parent }
    child {node  {b} 
    child {node {2}}
    child {node {3}} 
    edge from parent  node [right, brown] {}};
 
\end{tikzpicture}

\[
M^{b2} = 
\begin{pmatrix} 
	0.25 & 0.3 & 0.15 & 0.3 \\
	0.3 & 0.25 & 0.4 & 0.05\\
	0.15 & 0.4 & 0.25 & 0.2 \\
	0.3 & 0.05 & 0.2 & 0.45 \\
	\end{pmatrix} 
    \]

\caption{A phylogenetic model with 3 leaves and two internal vertices  a and b and a transition matrix}
\end{center}
 \end{figure}
 
\begin{exmp}
 Consider the rooted tree in Figure \ref{tree}.
 Let the transition matrix written here be associated to the edge from $b  \to 2$. 
 Label the columns and rows of the matrix with $A, C, G, T$.
 Then the entries in this transition matrix represent probabilities of DNA substituion
 when going from the species represented in the vertex b to the one represented with the vertex 2. Explicitly the entry $M^{b2}(x_b,x_2)$ 
 is the probability that the variable at $X_b$ mutates from the state $x_b$  
 to the state $x_2$ at vertex $2$.   
 So for example there is a $40\%$ chance $X_b = C$ changes to  $X_2 = G$.
\end{exmp}

Usually when defining a phylogenetic model, we put restrictions on
the structure of transition matrices that are used in the model.  
We call such restriction on the transition matrices a \emph{Markov model $\mu$}.
Examples of Markov models are the Jukes-Cantor model, Kimura 2 and 3 parameter models,
the strand symmetric model, and the general Markov model.
The resulting structures on the transition matrices are as follows.
The matrix $M^{K3P}$ denotes a generic matrix for the Kimura $3$-parameter model:
\[
M^{K3P} = 
\begin{pmatrix}
\alpha & \beta & \gamma & \delta \\
\beta & \alpha & \delta  & \gamma \\
\gamma & \delta  & \alpha & \beta \\
\delta  & \gamma & \beta & \alpha
\end{pmatrix}.
\]
Note that this matrix only has $3$ free parameters, because we require that 
$\alpha + \beta + \gamma + \delta = 1$.  
The Kimura $2$-parameter model arises as a submodel obtained by requiring
that $\beta =  \delta$.  And the Jukes-Cantor model is a further submodel where 
$\beta = \gamma = \delta$.

The Strand Symmetric Model is a larger model where the generic transition matrices
have the following form:
\[
M^{SSM} = 
\begin{pmatrix}
\alpha & \beta & \gamma & \delta \\
\epsilon & \zeta & \eta  & \theta \\
\theta & \eta  & \zeta & \epsilon \\
\delta  & \gamma & \beta & \alpha
\end{pmatrix}.
\]    
This model has $6$ free parameters associated to each transition matrix (again, because
the row sums must equal $1$).  
Note that this model contains the JC, K2P, and K3P models as submodels.  
Finally, the general Markov model makes no restrictions at all on the transition
matrices, except that the matrices are transition matrices.    

All of these models are examples of what are called \emph{equivariant models},
a useful class of phylogenetic models with key symmetry properties
that makes them easier to analyze \cite{Casanellas2023}.  A key family of which are the 
group-based models, of which the JC, K2P, and K3P are special cases.

\begin{definition}
A phylogenetic model $\mu$ is \emph{group-based} if for all transition matrices, $M$, associated to its edges there exits a function $f: G \to \mathbb{R}$ such that $M(g,h) = f(g-h)$ where $G$ is an abelian group and $M(g,h)$ denotes the probability of going from state $h$ given that the variable is in state $g$ in the previous vertex. 
 \end{definition}

In the next section, we explain how the discrete Fourier transform
can simplify the presentation of the group-based models.


\section{The Discrete Fourier Transform}\label{Fourier}

The discrete Fourier transform is a useful tool that,
for certain phylogenetic models, can be applied to simplify the
parametrization of the model, to get a parametrization in a new
coordinate system that involves polynomials with fewer terms.
For the case of group-based models (including the Jukes-Cantor,
Kimura 2 and 3 parameter models), it simplifies the phylogenetic
model on a tree into a toric variety.  For the strand symmetric model,
the discrete Fourier transform does not produce a toric variety,
but another variety that is considerably simpler than the standard
parametrization.

\begin{definition}\label{def:Fourier}
Let $G$ be a finite abelian group and let $\hat{G} = \mathrm{Hom}(G, \cc^\times)$ 
be its dual group of one dimensional representations.   
Let $f$ be a function $f : G \to \mathbb{C}$.
The \emph{discrete Fourier transform} of $f$ is the function 
$\hat{f} : \hat{G} \to \mathbb{C}$ defined by 
\[ \hat{f}(\chi) = \sum_{g \in G} f(g)\chi(g). \]
\end{definition}

Definition \ref{def:Fourier} provides the Fourier 
transform for an arbitrary finite abelian group.
We will see how this is applied and can simplify the
parametrization for a group-based model and other equivariant models.

First we describe the transformation for the group-based models,
and then we show how the parametrization looks of the group-based models
in the Fourier coordinates. 
Let $T$ is a rooted tree with $m$ leaves, in a group based model,
for each edge we have a transition matrix $M^e$ that satisfies $M^e(g,h) = f^e(g-h)$ for some particular function $f^e : G \to \mathbb{R}$. 
Additionally there is a root distribution $\pi : G \to \mathbb{R}$.
For the group-based models we consider here, we will assume that the root
distribution $\pi$ is always the uniform distribution.

Let $H(e)$ be the head vertex of edge $e$ and $T(e)$ be its tail vertex. 
Let $\lambda(e)$ be the set of leaves below $e$ and let $Int$ be the set of interior vertices of $T$.
Then the joint probability of observing $(g_1, \dots , g_m)$ at the leaves is:
\begin{align*}
 p(g_1 , \dots , g_m) &= \sum_{g \in G^{Int}} \pi(g_r) \prod_{e \in T} M^e(g_{T(e)}, g_{H(e)}) \\
 &= \sum_{g \in G^{Int}} \pi(g_r) \prod_{e \in T} f^e(g_{T(e)}- g_{H(e)}).
\end{align*}
See Chapter 15 of \cite{Sullivant2018Book} for more details.  

Now we compute the Fourier transform  $p : G^m \to \mathbb{R}$
over the group $G^m$. Note that the dual group of a product group
is the product of the dual groups.  So, as a linear change of coordinates, this transformation is
\[
\hat{p}(\chi_1, \ldots, \chi_m) =  \sum_{g_1, \ldots, g_m \in G} \chi_1(g_1) \cdots
\chi_m(g_m)  p(g_1, \ldots, g_m).
\]

The Fourier transform of $p$ can be described with the following
product formula:
\[ 
\hat{p}(\chi_1 , \dots , \chi_m)= 
  \hat{\pi}\left(\prod_{i=1}^m \chi_i\right)\prod_{e \in T} \hat{f}^e \left(\prod_{i\in \lambda(e)} \chi_i\right)
\]
This was first discussed in \cite{Evans1993} and \cite{Hendy1996}.
 
\begin{exmp}
Consider the rooted 3 leaf tree with underlying Jukes-Cantor model. Let $f^i(A)=b_i$ and $f^i(C)=f^i(G)=f^i(T)=a_i$.  We make the identification that $A = (0,0)$, $C = (0,1)$, $G = (1,0)$, and $T = (1,1)$.
The four characters of $\mathbb{Z}_2\times \mathbb{Z}_2$ are the rows of the following character table:

\begin{table}[H]
\centering
\begin{tabular}{r|rrrr}
         & A & C & G & T \\
         \hline
$\chi_A$ & $1 $    & $1 $    & $1 $     & $1 $     \\
$\chi_C$ & $1 $     & $-1$    & $1 $     & $-1$     \\
$\chi_G$ & $1 $     & $1 $     & $-1$     & $-1$    \\
$\chi_T$ & $1 $     & $-1$     & $-1$     & $1 $    
\end{tabular}
\end{table}
So the Fourier transform for a particular edge is
\[
\hat{f}^i(\chi_A) =  3a_i + b_i \quad \quad \mbox{ and } \quad \quad \hat{f}^i(\chi_C) = \hat{f}^i(\chi_G)
= \hat{f}^i(\chi_T) = b_i - a_i.
\]
Note that the Fourier transform of the uniform distribution has 
\[
\hat{\pi}(\chi_A) = 1, \quad \quad \mbox{ and } \quad \quad \hat{\pi}^i(\chi_C) = \hat{\pi}^i(\chi_G)
= \hat{\pi}^i(\chi_T)  = 0
\]
Let $x_i=3a_i+b_i$ and $y_i=b_i- a_i$.

We use  the shorthand $A = \chi_A, \ldots,$ and we write $q_{AAA} = \hat{p}(\chi_A, \chi_A, \chi_A), \ldots.$  Then we have
\[
q_{AAA} = x_1 x_2 x_3 , \quad   q_{ACC} = q_{AGG} = q_{ATT} =   x_1 y_2 y_3, 
\]
\[
 q_{CAC} = q_{GAG} = q_{TAT} =   y_1 x_2 y_3, \quad q_{CCA} = q_{GGA} = q_{TTA} =   y_1 y_2 x_3,
\]
\[
q_{CGT} = q_{CTG} = q_{GCT} = q_{GTC} = q_{TCG} = q_{TGC} =  y_1 y_2 y_3,
\]
\[
q_{\chi_1 \chi_2 \chi_3} = 0  \quad \mbox{ otherwise. }
\]
\end{exmp}

For a group-based model on the group $G = \zz_2 \times \zz_2$, we can express
the parametrization in the following form.  We have Fourier parameters
$a^{A|B}_{g}$ for each split $A|B$ in the tree, and each group element $g$.
The parametrization in Fourier coordinates looks like
\[
q_{g_1 \cdots g_n}  =  \begin{cases}
\prod_{A|B \in \Sigma(T)}  a^{A|B}_{\sum_{a \in A} g_a}  &  \mbox{ if }  \sum_{i = 1}^n g_i = (0,0)  \\
0 & \mbox{ otherwise.}
\end{cases}
\]
Then, depending on the specific model, there might be equalities that hold between
some the Fourier parameters. In the Jukes-Cantor model,
 for each split $A|B \in \Sigma(T)$, we have that
\[
a^{A|B}_{(0,1)} = a^{A|B}_{(1,0)} = a^{A|B}_{(1,1)} .
\]


\section{Tensors and Flattenings}\label{tensor}

In this section, we discuss how to think about the probability distribution
associated to a phylogenetic mixture model as a multi-way tensor.
The tensor perspective is useful, because we can also consider flattenings
of our tensor to lower order tensors, and matrices, in order to use Kruskal's theorem and matrix ranks of flattenings to prove identifiability
results. The  approach us using Kruskal's theorem to 
prove identifiability results first appeared in \cite{Allman2009}.

For a tree with $n$ leaves, and random variables with $\kappa$ states,
the joint probability distribution in our model is naturally considered
as a $\kappa \times \dots  \times \kappa$ $n$-tensor, inside of $\bigotimes_{i = 1}^n  \rr^\kappa$.  
From this $n$-way tensor, we can
also consider the tensor naturally as a lower order tensor, by
grouping indices according to a set partition of the coordinates.
Such a reorganization of the tensor is called a \emph{flattening}.

 \begin{definition}
 Let $P$ be an order $n$ tensor in $\bigotimes_{i =1}^n \mathbb{R}^{\kappa_i}$.
 Let $A_1| \dots |A_k$ be a partition of $[n]$.   
 The flattening of $P$ according to $A_1| \dots |A_k$, is an order-$k$ tensor
 obtained by grouping indices according to the partition $A_1| \dots |A_k$,
 and is denoted $\mathrm{Flat}_{A_1| \cdots | A_k}(P)$.
 Specifically, $\mathrm{Flat}_{A_1| \cdots | A_k}(P)$ is the image of $P$
 under the natural map 
 \[
 \bigotimes_{i =1}^n \mathbb{R}^{\kappa_i}   \rightarrow
 \bigotimes_{i = 1}^k  \left(  \bigotimes_{a \in A_i}  \rr^{a}  \right).
 \]
 In particular, $\mathrm{Flat}_{A_1| \cdots | A_k}(P)$ is a 
 $\prod_{a_1 \in A_1} \kappa_{a_1} 
 \times  \cdots \times  \prod_{a_k \in A_1} \kappa_{a_k}  $ tensor.
 \end{definition}

Note that in the specific case when $\kappa_i = \kappa$ for all $i$, and  if $k=2$, then 
we have our partition $A|B$ and $\mathrm{Flat}_{A|B}(P)$ is a matrix of
size $\kappa^{\#A} \times \kappa^{\#B}$.

 \begin{exmp}
Let $T$ be a $3\times 3 \times 3$ tensor with generic entries $t_{ijk}$,
and consider the partition $A|B = 1|23$.
Then 
\[
\Flat_{1|23}=\begin{pmatrix}
t_{111} & t_{112} & t_{113} & t_{121} & t_{122} & t_{123} & t_{131} & t_{132} & t_{133} \\
t_{211} & t_{212} & t_{213} & t_{221} & t_{222} & t_{223} & t_{231} & t_{232} & t_{233} \\
t_{311} & t_{312} & t_{313} & t_{321} & t_{322} & t_{323} & t_{331} & t_{332} & t_{333} \\
\end{pmatrix} 
\]
\end{exmp}

One key advantage to considering out probability distributions
as tensors is that we can use tensor rank as a tool 
to get at questions of identifiability in phylogenetic models.
To explain the key tool in this area, Kruskal's theorem,
we need to review some concepts related to tensor rank.

For $j = 1,2, 3$, let $\bm{m}^j = (m^j(1), \ldots, m^j(\kappa_j))  \in \rr^{\kappa_j}$.    Let  $\bm{m}^1 \otimes \bm{m}^2 \otimes \bm{m}^3$
denote the $\kappa_1 \times \kappa_2 \times \kappa_3$ tensor whose
$(u,v,w)$ entry is $m^1(u)m^2(v) m^3(v)$.  The tensor 
$\bm{m}^1 \otimes \bm{m}^2 \otimes \bm{m}^3$ is called a \emph{rank 1 tensor}.
A tensor is said to have rank $r$ if it can be written as the sum of $r$ rank $1$
tensors, and cannot be written as the sum of $r-1$ rank $1$
tensors.  Kruskal's theorem concerns the uniqueness of the representation
of a rank $r$ tensor as the sum of $r$ rank $1$ tensors.  
To express this, we first introduce the triple product notation
of matrices.

 \begin{definition}
 Let $M_j$ be a $r \times \kappa_j$ matrix with $i$th row $\bm{m_i^j} = (m_i^j(1), \dots m_i^j(\kappa_j)).$ Then let $[M_1, M_2, M_3]$ be defined by 
\[
[M_1, M_2, M_3] = \sum_{i=1}^r \bm{m}_i^1 \otimes \bm{m}_i^2 \otimes \bm{m}_i^3. 
\]
The triple product
$[M_1, M_2, M_3]$ is a $ \kappa_1 \times \kappa_2 \times \kappa_3$
tensor and its $(u,v,w)$ entry is 
   \[ 
   [M_1, M_2, M_3]_{(u,v,w)} = \sum_{i=1}^r m_i^1(u) \otimes m_i^2(v)  \otimes m_i^3(w) .
   \]
\end{definition}

Note that there is a natural nonuniqueness that is always present.
If $\Pi$ is an $r \times r$ permutation matrix, and $r \times r$ diagonal matrices
$D_1, D_2, D_3$ such that $D_1D_2D_3 = Id_r$, then
\[
[M_1, M_2, M_3]  =  [\Pi D_1 M_1, \Pi D_2 M_2,  \Pi D_3 M_3].
\]
If 
\[
[M_1, M_2, M_3] = [N_1, N_2, N_3]
\]
and there exist $\Pi$ and $D_1, D_2, D_3$ such that 
\[
[N_1, N_2, N_3] = [\Pi D_1 M_1, \Pi D_2 M_2,  \Pi D_3 M_3]
\]
when we say that $N_1, N_2, N_3$ are related  to $M_1, M_2, M_3$
by scaling and simultaneous permutation of the rows.  If this is the 
only way we can have $[M_1, M_2, M_3] = [N_1, N_2, N_3]$ then
we say that $[M_1, M_2, M_3]$ uniquely determines $M_1$, $ M_2$ , $M_3$ up to simultaneous permutation and scaling of the rows.
Note that this means that the rank one tensors that appear in the decomposition
are themselves unique.  

The existence of a  unique of the decomposition in the Kruskal theorem
depends on the notion of the Kruskal rank of a matrix.  

\begin{definition}
    Let $M$ be a $r \times \kappa$ matrix. 
    The \emph{Kruskal rank of a matrix $M$}, written as 
    $\rank_{K}(M)$ is the largest $k$ such that every 
    subset of $k$ rows of $M$ is linearly independent.
\end{definition}

Note that $\rank_{K}(M) \leq \rank(M) $ but if $M$ is of full row rank 
(that is, $\rank(M) = r$, it will also be full Kruskal rank.
Now we are ready to state Kruskal's Theorem.

 \begin{theorem} (Kruskal's Theorem) \cite{Kruskal77}
 For $j = 1,2,3$, let $M_j$ be an $r \times \kappa_j$ matrix and
 let $I_j= \rank_K M_j$. If 
 \[ 
 I_1 +I_2 +I_3 \geq 2r +2 
 \] 
 then $[M_1, M_2, M_3]$ uniquely determines $M_1$, $M_2$, and $M_3$ up to simultaneous permutation and scaling of the rows.  
 \end{theorem}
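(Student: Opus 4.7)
The plan is to follow Kruskal's original combinatorial strategy, which extracts everything from an auxiliary permutation lemma applied to each of the three factor matrices in turn. Suppose $[M_1, M_2, M_3] = [N_1, N_2, N_3]$ is a second decomposition as a sum of $r$ rank-one tensors. The first step is to observe that, since the tensor has rank exactly $r$ (the Kruskal rank hypothesis on the $M_j$ already forces the flattenings $\mathrm{Flat}_{i|jk}$ to have rank $r$), one may read off from the three flattenings that $M_j$ and $N_j$ span the same row space in $\mathbb{R}^{\kappa_j}$ for each $j$. Write $N_j = W_j M_j$ for some invertible $r \times r$ matrix $W_j$ expressing the rows of $N_j$ in terms of the rows of $M_j$; substituting into the tensor identity yields a rigid coupling among $W_1, W_2, W_3$.

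The technical heart is the following permutation lemma, which I would prove by induction on $r$: if $M$ and $N$ are $r \times \kappa$ matrices with the same row space, and if every nonzero vector in that row space has at least $r - \mathrm{rank}_K(M) + 1$ nonzero coordinates when expressed in the $M$-row basis, then $N$ is obtained from $M$ by a row permutation composed with nonzero rescaling. The base case $\mathrm{rank}_K(M) = r$ is essentially immediate from linear independence, and in the inductive step the sparsity hypothesis forces some row of $N$ to be supported on a single $M$-row, after which one matched row is deleted and the lemma is reapplied to the remaining $r-1$ rows.

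To deploy the lemma, I would verify its sparsity hypothesis for one factor pair, say $(M_1, N_1)$, by exploiting the tensor equation. If some row $w$ of $W_1$ had support of size strictly smaller than $r - I_1 + 1$, then the identity $[M_1,M_2,M_3]=[N_1,N_2,N_3]$ would induce a nontrivial linear dependence among a small collection of rank-one matrices formed from the corresponding rows of $M_2$ and $M_3$. Kruskal rank bounds on $M_2$ and $M_3$ then rule out any such dependence unless the support of $w$ is already a singleton. The crucial bookkeeping is that the shortfall $r - I_1$ must be absorbed by the surplus $I_2 + I_3 - r$, and the inequality $I_1 + I_2 + I_3 \geq 2r + 2$ is precisely what makes this accounting close, with the extra ``$+2$'' providing the slack that forces singleton support. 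Applying the lemma, $W_1$ is a permutation-scaling matrix; running the same argument for the other two factors shows each $W_j$ is of the same form, and the permutations must coincide (else distinct rank-one terms would collide), while the diagonal scalings multiply to $\mathrm{Id}_r$ because the rank-one summands themselves are preserved.

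The main obstacle is the sparsity verification in the third step. This is combinatorially delicate because it requires using all three Kruskal rank bounds simultaneously in the tight arithmetic of $I_1 + I_2 + I_3 \geq 2r + 2$; any relaxation of this inequality causes the case analysis on the support of a putative row of $W_1$ to break down, and indeed the bound is known to be sharp in general.
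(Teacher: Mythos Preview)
The paper does not prove Kruskal's Theorem. It is stated with a citation to \cite{Kruskal77} and invoked as a black box in the proof of Lemma~\ref{lem:commontripart}; there is no argument in the paper to compare your sketch against.

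As for the sketch itself, it is in the spirit of Kruskal's original strategy (reduce to a permutation lemma, then verify its hypothesis using the coupled identity among the three factors), but several steps are imprecise as written. Most notably, the move ``write $N_j = W_j M_j$ for some invertible $r \times r$ matrix $W_j$'' presupposes that each $M_j$ has full row rank $r$, which the Kruskal rank hypothesis does not guarantee: for instance with $r = 4$ one may have $I_1 = I_2 = 4$, $I_3 = 2$, and then $M_3$ need not have rank $4$, so no invertible $W_3$ need exist. Kruskal's actual permutation lemma is formulated to avoid this assumption, comparing instead the pattern of ranks of all column submatrices of $M_j$ and $N_j$; your version of the lemma (``every nonzero vector in that row space has at least $r - \rank_K(M) + 1$ nonzero coordinates'') is not equivalent and would need to be reworked. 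The accounting in your third step, where the ``$+2$'' closes the case analysis, is correct in spirit but the supporting argument about ``nontrivial linear dependence among a small collection of rank-one matrices'' is only a gesture toward the real combinatorics.
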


The motivation for considering tensor rank comes from the 
fact that the probability distribution from a single tree
can be represented as a triple product, when considered around
a given vertex.  Specifically, let $T$ be a tree, let $v$ be a 
trivalent vertex in the $T$.  We can assume that $v$ is the root
of the tree.  The vertex $v$ introduces a tripartition $A|B|C$
of the leaves.  Let $\mathrm{diag}(\pi)$ be a diagonal matrix
with the root distribution.  Let $M_A$ be the 
$\kappa \times \kappa^{\#A}$  matrix which is the flattenings
of the conditional distribution of $X_A$ given $X_v$.
Similarly, define $M_B$ and $\tilde{M}_C$ as conditional distributions of
$X_B$ given $X_v$ and $X_C$ given $X_v$, respectively.  Then,
if $P$ is the joint distribution of states at the leaves
of our tree $T$, we have
\[
\Flat_{A|B|C}(P)  =  [M_A, M_B, \mathrm{diag}(\pi) \tilde{M}_C]  =[M_A, M_B, M_C].
\]
where we let $M_C = \mathrm{diag}(\pi) \tilde{M}_C.$
This representation was first developed in
\cite{Allman2011}, and used to prove identifiability results
in a number of hidden variable models including phylogenetic
models.

Note that we could have absorbed the diagonal matrix 
$\mathrm{diag}(\pi)$ into any
of $M_A, M_B$ or $\tilde{M}_C$.  Since we are always working with
generic distributions, this would not change the Kruskal rank or
tensor rank.  Hence we can consider $M_A$, $M_B$, and $M_C$
as either conditional distributions given $X_v$, or just as joint distributions.


\section{A General Theorem on Identifiability of Phylogenetic Mixtures}
\label{sec:maintheorem}

Our goal in this section is to prove our main general result
about the identifiability of phylogenetic mixture models, Theorem \ref{thm:main}.
To do this, we must first define the rank property, extended rank property,
the standard Markov model property, and the No Shuffling Property, and
show how they are related to identifiability of mixture models.

\begin{definition}
    The Markov model $\mu$ is said to have the 
\emph{Rank property} $RP(r,n)$ if the following holds:
For any trees $T_1, \ldots, T_r$ on $n$ leaves, and generic probability
distributions $P_i \in M^\mu_{T_i}$, the matrix
\[
\begin{pmatrix} \Flat_{[n-1]|n}(P_1) & \Flat_{[n-1]|n}(P_2)  & \cdots & \Flat_{[n-1]|n}(P_r)
\end{pmatrix}
\]
has rank $r \kappa$.
\end{definition}

\begin{definition}
     The Markov model $\mu$ is said to have the 
\emph{extended rank property} $ERP(r,n)$ if the following holds:
For any tree $T_1$ on $n+1$ leaves that does not contain the split $[n-1]| \{n, n+1\}$,
and any trees $T_2, \ldots T_r$  on $n$ leaves, and generic probability
distributions $P_i \in M^\mu_{T_i}$, the matrix
\[
\begin{pmatrix} \Flat_{[n-1]|\{n, n+1\}}(P_1) & \Flat_{[n-1]|n}(P_2)  & \cdots & \Flat_{[n-1]|n}(P_r)
\end{pmatrix}
\]
has rank $(r-1) \kappa + \kappa^2$.
\end{definition}

A useful fact about both the rank property and extended rank property 
is that they satisfy persistence properties in their values.

\begin{prop}
    Let $\mu$ be a Markov model such that the rank property $RP(r,n)$ holds.
    Then the rank property $RP(s, m)$ holds for all $s \leq r$ and $m \geq n$.
    Similarly, if the extended rank property $ERP(r,n)$ holds, then
     the extended rank property $ERP(s, m)$ holds for all $s \leq r$ and $m \geq n$.
\end{prop}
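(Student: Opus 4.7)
The plan is to prove persistence in the number of trees $s$ and persistence in the number of leaves $m$ separately, treating the rank property and the extended rank property in parallel. For the decrease from $r$ down to $s \leq r$, the argument is pure restriction. Given $s$ trees $T_1, \dots, T_s$ as in the hypothesis of $RP(s,n)$ or $ERP(s,n)$, I would extend the tuple to $r$ trees by appending arbitrary trees $T_{s+1}, \dots, T_r$ on $n$ leaves together with sufficiently generic distributions on them. By $RP(r,n)$ (resp.\ $ERP(r,n)$) the stacked $r$-tree matrix has full column rank, so in particular its first $s\kappa$ (resp.\ $(s-1)\kappa + \kappa^2$) columns are linearly independent; but those columns depend only on $T_1, \dots, T_s$ and their parameters, so this exhibits a point in $S^\mu_{T_1} \times \cdots \times S^\mu_{T_s}$ at which the rank condition of $RP(s,n)$ or $ERP(s,n)$ holds. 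Since that condition is Zariski-open (it is the non-vanishing of some maximal minor), the existence of even one such point guarantees that it holds on a dense open subset.

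For the increase from $n$ to $m \geq n$ the tool is marginalization on the row indices of the flattening. I would fix a subset $S \subset [m-1]$ of size $n-1$ and sum out the leaves in $[m-1] \setminus S$ from each $P_i$, producing a marginal $P_i'$ on the induced binary subtree $T_i'$. The column leaves ($m$ in the $RP$ case, and $\{m, m+1\}$ for the first tree in the $ERP$ case) are preserved by this summation, so on each flattening marginalization is left multiplication by a single summation matrix $A$ that is independent of $i$. Therefore
\[
\bigl(\Flat(P_1') \mid \cdots \mid \Flat(P_s')\bigr) = A \cdot \bigl(\Flat(P_1) \mid \cdots \mid \Flat(P_s)\bigr),
\]
and so the rank of the right-hand side is at least the rank of the left. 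Applying $RP(s,n)$ or $ERP(s,n)$ to the generic marginals $P_i'$ makes the left side full column rank, and since the two matrices have identical column counts the two ranks must agree. In the $ERP$ case the subset $S$ must additionally be chosen so that $T_1'$ does not contain the split $S \mid \{m, m+1\}$; since $T_1$ is binary and avoids $[m-1] \mid \{m, m+1\}$, the path from leaf $m$ to leaf $m+1$ in $T_1$ passes through at least one internal vertex whose branch subtree contains some leaf $\ell \in [m-1]$, and including such an $\ell$ in $S$ keeps this branching point alive under contraction.

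The one step that is not essentially bookkeeping is justifying that each marginal $P_i'$ is itself generic in $M^\mu_{T_i'}$ whenever $P_i$ is generic in $M^\mu_{T_i}$. This is equivalent to dominance of the parameter-composition map $S^\mu_{T_i} \to S^\mu_{T_i'}$ obtained by multiplying transition matrices along contracted edges, which in turn amounts to $\mu$ being closed under composition in a suitably flexible way. For each of JC, K2P, K3P, SSM, and GMM this is classical and will be absorbed into the \emph{standard} Markov model hypothesis introduced in Section \ref{sec:maintheorem}; for a general $\mu$ without such an assumption, the genericity of the marginal would itself need to be an input. Granted this preservation of genericity, the rest of the argument is routine rank and flattening manipulation.
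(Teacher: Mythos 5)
Your proof follows the same route as the paper's: restriction to a column submatrix handles $s \leq r$, and expressing the $n$-leaf flattening as a row summation (marginalization) of the $m$-leaf flattening handles $m \geq n$; you are in fact more explicit than the paper about the openness of the full-rank condition and about why one may arrange for the marginal of a generic distribution to be generic (e.g.\ by setting the extra transition matrices to the identity). One detail in your ERP discussion needs a small repair: keeping a single internal vertex of the $m$-to-$(m+1)$ path alive is not sufficient, because if all of $S$ lies in the side branch hanging off one such vertex, then $S \mid \{m, m+1\}$ \emph{is} a split of the restricted tree $T_1'$ and $ERP(s,n)$ cannot be invoked. The hypothesis that $T_1$ avoids $[m-1] \mid \{m, m+1\}$ says precisely that this path has at least two internal vertices, each with a side branch meeting $[m-1]$; choosing $S$ to meet two distinct such side branches (possible since $\#S = n-1 \geq 2$) is what actually guarantees that $T_1'$ avoids $S \mid \{m, m+1\}$.
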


\begin{proof}
    Suppose that $RP(r,n)$ holds.  Let 
    \[
Q = \begin{pmatrix} \Flat_{[n-1]|n}(P_1) & \Flat_{[n-1]|n}(P_2)  & \cdots & \Flat_{[n-1]|n}(P_r)
\end{pmatrix}.
\]
    Then the associated matrix $Q'$ which is used to check $RP(s,n)$
    for $s \leq r$ is obtained from $Q$ by taking a submatrix of the columns of $Q$. 
    Since
    $Q$ had full column rank $Q'$ must have full column rank as well.

    Now suppose that $m \geq n$.   Construct the matrix $Q'$ to check $RP(r,m)$.
    Each of the distributions obtained to form $Q$ is gotten by marginalizing
    the corresponding distribution from $Q'$.  We can assume we are marginalizing
    some collection of random variables besides the last variable, $m$.  On the level
    of matrices, $Q$ is obtained from $Q'$ by adding together rows of $Q'$.  
    Since we assumed that $Q$ has rank $r \kappa$ and $Q$ is obtained as
    a linear transformation of $Q'$, and has exactly $r \kappa$ columns, it
    must also have had rank $r \kappa$, so $\mu$ satisfies $RP(r,m)$.

    The same arguments work verbatim for the extended rank property.
\end{proof}

One further property we need of our phylogenetic models is a certain notion of
standardness of the set of transition matrices, as that will
allow us to deduce that the matrices have a suitable Kruskal rank.

\begin{definition}
    A Markov  model $\mu$ is \emph{standard} if it
    satisfies the following conditions:
  \begin{itemize}
      \item the model is generically identifiable on trees
      \item the model contains the identity matrix, and
      \item  the model contains
    two transition matrices $M_1$ and $M_2$ such that  any pair of
    column vectors obtained from taking two columns of $M_1$, or one column of $M_1$
    and one column of $M_2$ are linearly independent.  
  \end{itemize}  
 
\end{definition}

\begin{prop}
  Let $\mu$ be a standard Markov model.  Let $T_1, \ldots, T_r$ be
  trees on $n$ leaves with $n \geq 2$.  Let  $P_1, \ldots, P_r$ be generic 
  probability distributions $P_i \in M^\mu_{T_i}$. 
  Then the matrix
  \[
Q = \begin{pmatrix} \Flat_{[n-1]|n}(P_1) & \Flat_{[n-1]|n}(P_2)  & \cdots & \Flat_{[n-1]|n}(P_r)
\end{pmatrix}
  \]
  has Kruskal rank $\geq 2$ for any $r$. 
\end{prop}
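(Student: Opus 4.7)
The matrix $\Flat_{[n-1]|n}(P_1)$ is exactly the submatrix of $Q$ formed by its first $\kappa$ columns. Since appending coordinates preserves linear independence of a pair of vectors, it suffices to prove that $\Flat_{[n-1]|n}(P_1)$ alone has Kruskal rank at least $2$ for generic $P_1 \in M^{\mu}_{T_1}$. Kruskal rank at least $2$ amounts to the conjunction of no zero row and no pair of proportional rows, each of which is the non-vanishing of a polynomial in the continuous parameters (either a single coordinate or some $2 \times 2$ minor of a two-row submatrix); the conjunction over the finitely many pairs is Zariski-open on $S^{\mu}_{T_1}$. Since this parameter space is irreducible, generic satisfaction reduces to exhibiting a single parameter point in the model at which the condition holds.

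\textbf{Constructing the witness.} Let $p$ be the vertex of $T_1$ adjacent to leaf $n$. The Markov factorization along the tree yields
\[
\Flat_{[n-1]|n}(P_1) = Y \cdot M^{pn},
\]
where $Y$ is the $\kappa^{n-1} \times \kappa$ matrix whose $\mathbf{x}$-th row is $P(X_{[n-1]} = \mathbf{x}, X_p = \cdot)$, and $M^{pn}$ is the transition matrix on the edge $p \to n$. Standardness guarantees invertible transition matrices in the model (the identity is in the model, so generic small perturbations of the identity are too), so we may choose $M^{pn}$ invertible. Kruskal rank is preserved under right-multiplication by an invertible matrix, so it suffices to show that $Y$ has Kruskal rank at least $2$. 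For this, one selects the remaining transition matrices using the matrices $M_1, M_2$ from standardness, exploiting the pairwise linear independence of the columns of $M_1$ to arrange that the conditional posteriors $P(X_p = \cdot \mid X_{[n-1]} = \mathbf{x})$ are pairwise distinct as $\mathbf{x}$ varies over $[\kappa]^{n-1}$. Together with positivity of the marginals $P(X_{[n-1]} = \mathbf{x})$ (ensured by a perturbation away from the degenerate identity assignment), this forces the rows of $Y$ to be pairwise linearly independent.

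\textbf{Main obstacle.} The principal difficulty is the witness verification: showing that at the chosen parameters the posterior map $\mathbf{x} \mapsto P(X_p = \cdot \mid X_{[n-1]} = \mathbf{x})$ is injective. When $\mathbf{x}$ and $\mathbf{x}'$ differ at multiple leaves, the ratio of their posteriors factors as a product of column ratios of $M_1$ across several edges of $T_1$, and one must rule out accidental equalities of such products. This is handled by tracking the resulting polynomial identities along the tree and combining the column-independence of $M_1$ (augmented when necessary by the cross-condition involving $M_2$) with generic perturbations, confining any failure of injectivity to a proper subvariety of $S^{\mu}_{T_1}$ and thereby establishing the required Zariski-open witness.
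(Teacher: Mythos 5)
Your proof addresses the wrong pairs of vectors, and the step you yourself flag as the main obstacle is never actually carried out. On the first point: the Kruskal rank that matters here is taken in the ``component'' direction of $Q$, i.e.\ the claim is that the $r\kappa$ \emph{columns} of $Q$, indexed by pairs (tree $i$, state of leaf $n$), are pairwise linearly independent. This is what feeds into Kruskal's theorem in Lemma \ref{lem:commontripart}, where the transpose of such a concatenation plays the role of the $r\kappa$-row matrix $M_C$. Your opening reduction --- restricting to the single block $\Flat_{[n-1]|n}(P_1)$ because ``appending coordinates preserves linear independence'' of rows --- is only valid for the row-wise reading, and it discards exactly the substantive content: that a column coming from $P_i$ and a column coming from $P_j$ with $i \neq j$ are linearly independent. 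That cross-block independence is precisely why the definition of standard includes the clause about one column of $M_1$ paired with one column of $M_2$; your argument never uses that clause for its intended purpose. The paper's proof goes the other way: marginalize every $P_i$ down to two leaves (a linear operation on rows, so independence of columns of the marginalized matrices implies independence of the original columns), observe that any given pair of columns involves at most two of the trees, and then the witness matrices $M_1, M_2$ from the definition of standard directly certify independence of that pair; genericity follows because each pairwise condition is Zariski-open and a witness may be chosen separately for each pair.

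On the second point: even granting your row-wise reading, what you must verify is that all $\kappa^{n-1}$ posterior vectors $P(X_n = \cdot \mid X_{[n-1]} = \mathbf{x})$ are pairwise non-proportional --- a statement about exponentially many vectors in $\mathbb{R}^{\kappa}$, and verifying it at a concrete witness is the entire difficulty. Your closing paragraph describes this as ``handled by tracking the resulting polynomial identities along the tree'' but supplies no actual argument, so even the statement you set out to prove is not established. Redo the argument for the column version, where the marginalization-to-two-leaves reduction makes the witness verification immediate from the definition of standard.
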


\begin{proof}
    By marginalizing, we can assume that all the trees are just a single edge with $2$
    leafs.  To see that the matrices have Kruskal rank $\geq 2$, we just need to look
    at the case where $r = 1, 2$, and taking any pair of columns from the resulting
    two transition matrices.  However, the definition of standard specifically
    covers these cases, which shows that matrix $Q$  has Kruskal rank at least $2$.
\end{proof}

Last we come to the No Shuffling Property, which says that 
parts of probability distributions on different trees cannot
be combined to get a distribution on another tree.

\begin{definition}
A Markov model $\mu$ is said to the have the \emph{No Shuffling Property} 
if the following holds:
For any trees $T_1, \ldots, T_r$ with $n \geq 3$ leaves 
and generic probability distributions
$P_i \in M^\mu_{T_i}$, let $Q$ be the matrix
\[
Q = \begin{pmatrix} \pi_1 \Flat_{[n-1]|n}(P_1) & \pi_2 \Flat_{[n-1]|n}(P_2)  & \cdots & \pi_r\Flat_{[n-1]|n}(P_r)
\end{pmatrix}
\]
where $\pi \in \Delta_r$ is generic.  
Form a new matrix $Q'$ by taking $\kappa$ not necessarily distinct columns of $Q$, and let $P$
be the resulting tensor.  If $P \in M^\mu_T$ for some $T$, then all
columns of $Q'$ must have come from the same $P_i.$
\end{definition} 

The importance of the No Shuffling Property is that it implies that if
we permute the columns of $Q$, we can still tell which columns belong together
(coming from the same distribution and same tree). This follows because 
it is not possible
to get a distribution from a tree by taking a combination of some of the columns
from different trees.

Our goal at this point is to show how a model that satisfies
these properties can be made to follow the proof
strategy from \cite{Rhodes2012}, closely following the
outline of those ideas, but substituting in these new properties
in place of directly using the General Markov model to prove the results.

\begin{lemma} \label{lem:flatrank}
    Let $\mu$ be a  Markov model that satisfies the extended rank property $ERP(r,k)$ and let $r \leq \kappa^{k -1}$.
    Let $P$ be a probability that is a generic mixture of 
    of $r$ generic distributions from trees $T_1, \ldots, T_r$ on
    $n$ leaves.  Suppose that $A|B$ is a split with $\#A > k $ and $\#B > k$.
    \begin{enumerate}
        \item If every tree $T_1, \ldots, T_r$ displays the split
        $A|B$ then $\rank (\Flat_{A|B}(P)) \leq \kappa r.$
        \item  If there is some tree $T_i$ that does not display the
        split $A|B$ then $\rank (\Flat_{A|B}(P)) > \kappa r.$
    \end{enumerate}
\end{lemma}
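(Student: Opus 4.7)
The plan is to split on whether every $T_i$ displays $A|B$: the edge decomposition handles part (1), while part (2) reduces, after a carefully chosen marginalization, to the extended rank property.

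For part (1), each $T_i$ displaying $A|B$ has an edge whose removal separates the $A$- and $B$-leaves, so $P_i$ factors through the $\kappa$ states on that edge and $\rank \Flat_{A|B}(P_i) \leq \kappa$. Since $\Flat_{A|B}(P) = \sum_i \pi_i \Flat_{A|B}(P_i)$, subadditivity of rank yields $\rank \Flat_{A|B}(P) \leq r\kappa$.

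For part (2), assume $T_1$ does not display $A|B$. A standard quartet argument produces $a_1,a_2 \in A$ and $b_1,b_2 \in B$ whose induced quartet in $T_1$ has one of the crossing splits $\{a_1,b_j\}|\{a_2,b_{3-j}\}$ rather than $\{a_1,a_2\}|\{b_1,b_2\}$. Choose $A' \subseteq A$ with $\#A' = k+1$ and $\{a_1,a_2\} \subseteq A'$ (possible since $\#A > k$), and set $B' = \{b_1,b_2\}$. Marginalizing $P$ onto $A' \cup B'$ gives distributions $P_i^{\mathrm{marg}}$: by the quartet choice $T_1^{\mathrm{marg}}$ does not display $A'|B'$, while $T_i^{\mathrm{marg}}$ does for $i \geq 2$. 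Further marginalizing $b_2$ from $P_i^{\mathrm{marg}}$ for $i \geq 2$ yields $\widetilde{P}_i$ on $A' \cup \{b_1\}$. Persistence promotes $ERP(r,k)$ to $ERP(r,k+2)$, which applied to $T_1^{\mathrm{marg}}$ and the trees of the $\widetilde{P}_i$ gives
\[
\rank \bigl[\, \Flat_{A'|B'}(P_1^{\mathrm{marg}}) \,\,\big|\,\, \Flat_{A'|\{b_1\}}(\widetilde{P}_2) \,\,\big|\,\, \cdots \,\,\big|\,\, \Flat_{A'|\{b_1\}}(\widetilde{P}_r) \,\bigr] = (r-1)\kappa + \kappa^2.
\]

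The closing step is a column-span chase. For $i \geq 2$, the column span of $\Flat_{A'|\{b_1\}}(\widetilde{P}_i)$ equals that of $\Flat_{A'|B'}(P_i^{\mathrm{marg}})$, since both are the $\kappa$-dimensional span of conditional distributions of $A'$-states given the separating-edge state in $T_i^{\mathrm{marg}}$. Combined with the $ERP$ rank identity, this gives $\dim \sum_i \mathrm{colspan}(\Flat_{A'|B'}(P_i^{\mathrm{marg}})) \geq (r-1)\kappa + \kappa^2$. Since $\Flat_{A'|B'}(P_i^{\mathrm{marg}})$ is obtained from $\Flat_{A|B}(P_i)$ by linear row-projection ($A \to A'$) and column-projection ($B \to B'$), and projections cannot increase the dimension of a sum of subspaces, we conclude $\dim \sum_i \mathrm{colspan}(\Flat_{A|B}(P_i)) \geq (r-1)\kappa + \kappa^2 = r\kappa + \kappa(\kappa - 1)$. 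The generic-mixing identity $\rank(\sum_i \pi_i M_i) = \min(\#\mathrm{cols},\, \dim \sum_i \mathrm{colspan}(M_i))$, together with $\kappa^{\#B} \geq \kappa^{k+1} > r\kappa$, then yields $\rank \Flat_{A|B}(P) > r\kappa$. The main obstacle is the column-span bookkeeping that links the $ERP$ conclusion on slightly reduced trees back to the original flattening; this is what forces the use of the \emph{extended} rank property rather than just $RP$, since one block must contribute a $\kappa^2$-dimensional piece from the non-displayed split and that contribution must be linearly independent from those of the displayed trees.
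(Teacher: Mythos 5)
Your part (1) is fine and matches the paper's factorization argument. The problem is the last step of part (2). The marginalization set-up (quartet $a_1,a_2,b_1,b_2$, the set $A'$ with $\#A'=k+1$, persistence giving $ERP(r,k+2)$, and the observation that column spans can only shrink under the row/column marginalization maps) is sound, and in fact you do not even need your claim that $T_i^{\mathrm{marg}}$ displays $A'|B'$ for $i\geq 2$ (which is unjustified, since part (2) allows several trees to miss the split $A|B$) — the containment $\mathrm{colspan}\,\Flat_{A'|\{b_1\}}(\widetilde{P}_i)\subseteq \mathrm{colspan}\,\Flat_{A'|B'}(P_i^{\mathrm{marg}})$ holds unconditionally because marginalizing $b_2$ just sums groups of columns. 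But the ``generic-mixing identity'' $\rank\bigl(\sum_i \pi_i M_i\bigr)=\min\bigl(\#\mathrm{cols},\ \dim\sum_i \mathrm{colspan}(M_i)\bigr)$ that you use to close the argument is false: take $M_1=\begin{pmatrix}1&0\\0&0\end{pmatrix}$ and $M_2=\begin{pmatrix}0&0\\1&0\end{pmatrix}$; their column spans sum to all of $\rr^2$, yet every combination $\pi_1M_1+\pi_2M_2$ has rank $1$. The rank of a mixture is also capped by the dimension of the sum of the \emph{row} spans, so a lower bound on $\dim\sum_i\mathrm{colspan}(\Flat_{A|B}(P_i))$ alone cannot force $\rank\Flat_{A|B}(P)>\kappa r$; your chain of inequalities bounds the rank of the matrix obtained by placing the blocks side by side, not of their weighted sum, and nothing in the argument prevents cancellation coming from overlapping row spaces.

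This missing row-space control is exactly what the paper's proof supplies and why it proceeds differently: since ``$\rank>\kappa r$'' is an open condition, it suffices (by Proposition 3.2 of Rhodes--Sullivant) to exhibit one parameter choice, and the paper picks identity matrices on internal edges so that the $T_1$-component $M_1=N_{1,A}^TR_1N_{1,B}$ has rank $\kappa^2$ with \emph{coordinate} image and kernel, while the remaining components give $M_2=N_A^TRN_B$ of rank $\kappa(r-1)$. With that special structure one can verify both $\mathrm{im}(M_1)\cap\mathrm{im}(M_2)=0$ and the complementary kernel condition (this is where the $ERP$/Fourier--Vandermonde analysis is actually used), which yields rank additivity $\rank(M_1+M_2)=\kappa^2+\kappa(r-1)>\kappa r$. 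To repair your route you would need an analogous statement on both sides simultaneously — e.g.\ that the image of the $T_1$-block meets the span of the other blocks trivially \emph{and} the same for row spans/kernels — and the $ERP$ as defined only gives the column-side stacked-rank statement, so the final inference does not go through as written.
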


\begin{proof}
         If $A|B$ is compatible with all trees in $\mathbf{T},$ then, by passing to binary resolutions of the $T_i,$ let  $A|B$ be associated to the edge $e_i = (a_i, b_i)$ in $T_i.$ Then one sees that
 \[
 \Flat_{A|B}(P) = M^T_A Q M_B. 
 \]
  Here $Q$ is the $\kappa r \times \kappa r$ block-diagonal matrix whose 
  $i$th $\kappa \times \kappa$ block gives the joint probability distribution of states for the random variables at $a_i$ and $b_i$, 
  weighted by the component proportion $\pi_i.$ 
  The matrices $M_A$, $M_B$ are stochastic, of sizes $ \kappa r \times \kappa^{\# A}$,  $\kappa r \times \kappa^{\# B}$, with entries in the $i$th block of $\kappa$ rows giving probabilities of states of variables in $A$, $B$ conditioned on states at $a_i$, $b_i$. 
  Since $Q$ is a $\kappa r \times \kappa r$  matrix it has rank at most $\kappa r$,
  which implies that  $\Flat_{A|B}(P)$ has rank at most $\kappa r$
  as well.  

 Suppose next that $A|B$ is not compatible with at least one of the trees in 
 $\mathbf{T},$ say $T_1.$ 
 To show that $\text{Flat}_{A|B}(P)$ generically has rank greater than 
 $\kappa r$, it is enough to give a single choice of parameters producing such a rank. Indeed, this follows from Proposition 3.2 in \cite{Rhodes2012}, 
 applied to the model and the variety of matrices of rank at most $\kappa r$. 
 
For each $T_i$ with $i>1$ choose all Markov matrices 
for all internal edges of $T_i$ to be the identity, $I_{\kappa}$. 
Since $T_1$ is not compatible with $A|B$, by Theorem 3.8.6 of [19], 
it has an edge $e= (c,d)$, with associated split $C|D$, 
such that all four sets $A\cap C$, $A\cap D$, $B\cap C$, $B\cap D$ are nonempty. 
For all internal edges of $T_1$ except $e$, 
choose Markov matrices to be $I_{\kappa}$ as well. 
Since the effect of an identity matrix on an edge is the same as contracting that edge. For simplicity, we will refer to these contradicted trees by their original labeling and just work with these contracted edge trees. This reduces our model to the case where  for $i>1$, 
$T_i$ is a star tree with central node $a_i$, and $T_1$ has the form of two star trees, on $C$ and on $D$, that are joined at their central nodes by $e$. 

Let $P= P_1 + P'$ where $P_1$ is the mixture component from $T_1$, and $P'$ the sum of the components on the star trees $T_2= \dots = T_r$. 
Then one sees that
\[ 
M_2 := Flat_{A|B}(P') = N^T_A R N_B,
\]
with $R$ an $\kappa(r-1) \times \kappa(r-1)$ diagonal matrix giving the distribution of states at $a_i$ in components $2, \ldots, r$ weighted by $\pi_i$ and 
$N_A$, $N_B$ are stochastic matrices of sizes $\kappa(r-1) \times \kappa^{\# A}$, $\kappa(r-1) \times \kappa^{\# B}$ with entries giving conditional probabilities of states of variables in $A$, $B$ conditioned on states/components at the $a_i$. 
By choosing the positive root distributions at the nodes $a_i$, and positive $\pi_i$,  
$R$ is ensured to have positive diagonal entries, and hence have full rank. 

Consider $P_1$ where all matrices on pendant edges of $T_1$ are chosen to be $I_{\kappa}$. Also,  both the root distribution at $c$ and $M_e$ are chosen to have all positive entries. Then let
\[ 
M_1 := \Flat_{A|B}(P_1) = N^T_{1,A} R_1 N_{1,B}. 
\]
where $R_1$ is a $\kappa^2 \times \kappa^2$ 
diagonal matrix with entries giving the joint distribution at $c$ and $d$ weighted by $\pi_1$, and $N_{1,A}$, $N_{1,B}$ have all zero entries except for a single 1 in each row, and full row rank. 
Thus $M_1$ has rank $\kappa^2$. Moreover, it has at most one non-zero entry in each row and column, so both $\mathrm{im}(M_1)$ and $\ker(M_1)$ are coordinate subspaces. 

Since $\Flat_{A|B}(P)= M_1 + M_2$ and $\mu$ has the extended rank property $ERP(r,k)$, $\text{rank}(M_1+M_2)> \kappa r$ and thus we are done.
 \end{proof}

 Next goal is to show how to use the ranks of flattenings from Lemma \ref{lem:flatrank}
 to identify the key features of a phylogenetic mixture model.

\begin{lemma} \label{lem:find2splits}
   Suppose that $\mathbf{T} = (T_1, \ldots, T_r)  \in \mathcal{T}(r,n,k)$, and 
   let $P \in M^\mu_{\mathbf{T}}$ be generic.  Suppose that $\mu$ satisfies
   $RP(n,k)$ and $ERP(n,k)$.  Then we can use ranks of flattenings
   to find a partition on $[n]$ into three sets $A, B, C$ with $\#A \geq k$, $\#B \geq k$, such that $A|B \cup C$ and $B | A \cup C$ are valid splits for all 
   the trees $T_1, \ldots, T_r$.
\end{lemma}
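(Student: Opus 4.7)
The plan is to use Lemma \ref{lem:flatrank} as a detector for common displayed splits of $T_1, \ldots, T_r$ and then combinatorially assemble two of these splits into the desired tripartition. Specifically, I would enumerate all subsets $D \subseteq [n]$ with $\#D > k$ and $\#([n]\setminus D) > k$, and for each compute $\rank \Flat_{D|[n]\setminus D}(P)$. Under the standing hypotheses $RP(r,k)$ and $ERP(r,k)$, Lemma \ref{lem:flatrank} says this rank is at most $\kappa r$ precisely when $D|[n]\setminus D$ is a split displayed by every $T_i$, and otherwise it strictly exceeds $\kappa r$. Let $\Sigma$ denote the finite collection of common displayed splits picked out by this rank test.

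By definition of $\mathcal{T}(r,n,k)$, the guaranteed common tripartition $A_0|B_0|C_0$ satisfies $\#A_0 \geq \#B_0 \geq k$ and $\#C_0 \geq 1$, so each of the two splits $A_0 | B_0 \cup C_0$ and $B_0 | A_0 \cup C_0$ has its complementary side of size at least $k+1$, and when $\#A_0, \#B_0 > k$ both splits lie in $\Sigma$. I would then search $\Sigma$ combinatorially for a pair of splits $X_1|Y_1$ and $X_2|Y_2$ with $X_1 \cap X_2 = \emptyset$ and $\#X_1, \#X_2 \geq k$. Any two splits in $\Sigma$ must be pairwise compatible, since both are displayed by each $T_i$; disjointness of $X_1$ and $X_2$ then forces $X_1 \subseteq Y_2$ and $X_2 \subseteq Y_1$, so setting $A := X_1$, $B := X_2$, and $C := [n] \setminus (A \cup B)$ produces the desired tripartition, with the existence of such a pair guaranteed by the previous sentence.

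The main obstacle is the boundary case $\#A_0 = \#B_0 = k$ exactly: here a split of interest has one side of size $k$, which falls outside the strict-inequality scope of Lemma \ref{lem:flatrank} and so is not directly certified by the rank test. I would address this by examining the size-$(k+1)$ refinements $(A_0 \cup \{j\}) | ([n] \setminus A_0) \setminus \{j\}$ for $j \in B_0 \cup C_0$: each has both sides of size strictly greater than $k$ and is still displayed by every $T_i$, hence lies in $\Sigma$, and the split $A_0 | B_0 \cup C_0$ is recoverable as the common refinement of the size-$(k+1)$ splits in $\Sigma$ that share a common $k$-element intersection. The analogous reconstruction handles $B_0 | A_0 \cup C_0$. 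With this supplementary bookkeeping the construction of $(A,B,C)$ from $\Sigma$ is immediate, completing the argument.
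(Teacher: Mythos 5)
Your first two steps are essentially the paper's argument, reorganized: the paper simply enumerates candidate tripartitions $A|B|C$ with $\#A,\#B\geq k$ and applies the rank criterion of Lemma \ref{lem:flatrank} to the two splits $A|B\cup C$ and $B|A\cup C$, whereas you first harvest the set $\Sigma$ of all common splits and then pair up two disjoint small sides. That reorganization is fine, and the pairing logic is sound (disjointness of $X_1,X_2$ alone gives $X_1\subseteq Y_2$ and $X_2\subseteq Y_1$); the only bookkeeping you omit is to insist that $X_1\cup X_2\neq [n]$, so that $C$ is nonempty --- which is needed for the conclusion to be a genuine tripartition and for the downstream use in Lemma \ref{lem:findfrom2splits}, and which is guaranteed to be achievable because $C_0\neq\emptyset$.

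The genuine gap is in your boundary-case patch. You assert that for $j\in B_0\cup C_0$ the refinement $(A_0\cup\{j\})\,|\,([n]\setminus A_0)\setminus\{j\}$ ``is still displayed by every $T_i$.'' This is false: the splits of a tree are exactly the bipartitions induced by its edges, and enlarging one side of a split by a single arbitrary leaf almost never yields another edge-induced bipartition. For instance, if in $T_i$ the vertex $v_i$ inducing $A_0|B_0|C_0$ has the subtree on $B_0$ attached by a single edge and $\#B_0\geq 2$, then for $j\in B_0$ no edge of $T_i$ separates $A_0\cup\{j\}$ from the rest. Consequently these refinements need not lie in $\Sigma$, and the proposed reconstruction of $A_0|B_0\cup C_0$ as a ``common refinement of size-$(k+1)$ splits'' has nothing to work with. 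The subtlety you identified is real --- Lemma \ref{lem:flatrank} is stated with the strict inequalities $\#A>k$, $\#B>k$, while the guaranteed tripartition may have $\#B_0=k$ --- but the correct resolution is the one the paper uses (implicitly): apply the rank criterion directly to the splits $A|B\cup C$ and $B|A\cup C$, whose large sides have size at least $k+1$, and observe that the actual hypothesis needed for the rank dichotomy is $\kappa^{\#D}>r\kappa$ on both sides (which the assumptions on $r$ and $k$ supply), not $\#D>k$ per se. Replace your refinement argument with that observation and the proof closes.
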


\begin{proof}   
To find such an $A|B|C$, we simply test all the tripartitions that have $\#A \geq k$
and $\#B \geq k$.  We know, by the definition of $\mathcal{T}(r,n,k)$, that there must exist
a triple $A'|B'|C'$ where each of the splits $A'| B' \cup C'$, $B' |A' \cup C'$, and
$C'| A' \cup B'$ holds for all the trees $T_1, \ldots, T_r$.   
According to Lemma \ref{lem:flatrank},
a split $D|E$ has $\rank \Flat_{D|E}(P) \leq r\kappa$ if and only if
the split $D|E$ appears in all the trees $T_1, \ldots, T_r$, provided
that $\kappa^{\#D} > r \kappa$ and $\kappa^{\#E} > r \kappa$.  
By our assumptions on $k$, this holds.
So, the ranks of flattenings will find a triparition $A|B|C$ of the
desired type.  
\end{proof}

Note that just because we find $A|B \cup C$ and $B| A \cup C$
using the ranks of flattenings, it does not necessarily imply that 
$C|A \cup B$ is a valid split in all the trees.  This is in spite of
the fact that we know that there exists a triple $A'|B'|C'$ that
is a common tripartition to all trees.  Flattenings cannot necessarily
find that tripartition alone.  However, we can use Lemma \ref{lem:find2splits}
to prove our identifiability results anyways.  This result follows the
proof of Theorem 4.4 of \cite{Rhodes2012}.

\begin{lemma}   \label{lem:commontripart}
    Let $\mu$ be a standard Markov model that satisfies $RP(r,k)$ and
    the No Shuffling Property.  
    Suppose that the trees $\mathbf{T} = (T_1, \ldots, T_r)$ have
    a known common tripartition $A|B|C$ with $\#A \geq \#B \geq k$.  
    Then both $\mathbf{T}$ and the numerical parameters of the $\mu$-mixture model
    on $\mathbf{T}$ are generically identifiable.  
\end{lemma}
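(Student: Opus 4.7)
The plan is to mimic Theorem 4.4 of \cite{Rhodes2012}: set up a triple-product decomposition of the mixture distribution around the common tripartition vertex, apply Kruskal's theorem to recover the rank-$1$ terms uniquely, and then invoke the No Shuffling Property to correctly group the $r\kappa$ recovered terms into the $r$ underlying trees. Once each component distribution $P_i$ is isolated, standardness (which includes generic identifiability on a single tree) pins down each $T_i$ and its numerical parameters, and the result follows up to the unavoidable permutation of mixture components.

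For the setup, I would root each $T_i$ at the trivalent vertex $v_i$ realizing the tripartition $A|B|C$. This gives $\Flat_{A|B|C}(P_i) = [M_A^{(i)}, M_B^{(i)}, M_C^{(i)}]$, where row $s$ of $M_X^{(i)}$ is the conditional distribution of the leaves in $X$ given $X_{v_i}=s$. Stacking these vertically, and absorbing both the mixture weight $\pi_i$ and the root distribution at $v_i$ into the $C$-factor, yields $\Flat_{A|B|C}(P) = [M_A, M_B, M_C]$ with $M_X$ of size $r\kappa \times \kappa^{\#X}$. Because $\#A \geq \#B \geq k$ and $\mu$ satisfies $RP(r,k)$, both $M_A$ and $M_B$ have full row rank $r\kappa$, and therefore Kruskal rank $r\kappa$. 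The preceding standardness proposition gives $\rank_K M_C \geq 2$. Hence $\rank_K M_A + \rank_K M_B + \rank_K M_C \geq 2r\kappa + 2$, so Kruskal's theorem recovers the $r\kappa$ triples $(m_A^i, m_B^i, m_C^i)$ uniquely up to a simultaneous permutation of rows and a diagonal rescaling.

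The diagonal rescaling is harmless: each row of $M_A$ and $M_B$ is a probability vector summing to one, so the rescaling is pushed into $M_C$, where it matches the product of the mixture weight with the root distribution at $v_i$. The remaining task is to partition the $r\kappa$ recovered triples into $r$ blocks of $\kappa$, one per tree. Here the No Shuffling Property is used: if some candidate block mixed rows from distinct $P_i$, then, after marginalizing all but one leaf on the $A \cup B$ side and reflattening as $[n-1]|n$, that block would correspond to a $\kappa$-column selection of the stacked matrix appearing in the No Shuffling definition whose columns come from different $P_i$. By No Shuffling, such a selection cannot lie in $M^{\mu}_T$ for any $T$, so the supposed tree distribution for that block does not exist; only the true partition produces valid tree distributions, and thus the partition is forced.

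With each $P_i$ now identified, generic identifiability of $\mu$ on a single tree (part of standardness) recovers each $T_i$ together with its continuous parameters, and reassembly returns $\mathbf{T}$ and the mixture parameters up to the unavoidable permutation of mixture indices. The main obstacle I anticipate is making the reflattening step of the previous paragraph fully precise: the Kruskal ambiguity operates at the level of rows of $[M_A,M_B,M_C]$, while the No Shuffling Property is phrased at the level of columns of a stacked $\Flat_{[n-1]|n}$ matrix, and matching these two viewpoints cleanly---essentially by marginalizing out all but one leaf on one side of the tripartition and then transposing---is the piece that requires the most care.
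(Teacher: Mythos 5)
Your proposal follows essentially the same route as the paper's proof: the triple-product decomposition $\Flat_{A|B|C}(P)=[M_A,M_B,M_C]$ around the common tripartition, Kruskal's theorem using $RP(r,k)$ for the $A$- and $B$-factors and standardness for the Kruskal rank of the $C$-factor, the No Shuffling Property to regroup the recovered rows into the $r$ mixture components, and single-tree identifiability of $\mu$ to finish. The row-versus-column matching you flag at the end is exactly the step the paper treats most tersely, and your reading of how the No Shuffling Property is meant to be applied there is the intended one.
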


\begin{proof}
    Since all the trees in $\mathbf{T}$ share a common tripartition,
    we can write a distribution in the mixture model as a triple product
    \[
    \Flat_{A|B|C}(P) =  [M_A, M_B, M_C]
    \]
    where the matrices are $M_A$, $M_B$, and $M_C$ 
    are as described at the end of Section \ref{tensor}.  
    
    Since the model $\mu$ satisfies the rank property $RP(r,k)$,
    for generic choices of parameter the matrices $M_A$ and $M_B$
    will each have rank $r \kappa$.   The fact that $\mu$ is standard
    guarantees that $M_C$ has Kruskal rank $\geq 2$.  

    The triple product representation expresses $\Flat_{A|B|C}(P)$ 
    as the sum of $r \kappa$ rank $1$ tensors.  But since
    \[
    \rank_K(M_A) + \rank_{K}(M_B) + \rank_K(M_C)  \geq r \kappa + r \kappa + 2  = 2r\kappa + 2
    \]
    we can apply Kruskal's theorem to see that the matrices
    $M_A$, $M_B$, and $M_C$ can be recovered up to scaling and permuting the rows.

    Each of the rows of the recovered matrices $M_A$, $M_B$, $M_C$ will have entries
    from a scaled slice from a tree distribution on a subtree of one of the
    $T_i$ (the subtree from the common vertex to the leaves $A$).
    We need to group these rows together by the mixture components they come
    from.  However, since we have assumed that $\mu$ satisfies the
    No Shuffling Property, there is only one way to do this for $M_A$.
    Since the ordering of rows of $M_A$ denotes the order of the rows
    of $M_B$ and $M_C$, we can reassemble each scaled probability distribution
    $\pi_i P_i $ as the triple produce $[M_{i,A}, M_{i,B}, M_{i,C}]$
    for submatrices of $M_A, M_B, M_C$ respectively.

    From the scaled distribution $\pi_i P_i$ we recover the mixing weight via 
    the sum 
    \[
    \pi_i  =  \sum_{(j_1, \ldots, j_n) \in [\kappa]^n}  \pi_i P_i(j_1, \ldots, j_n).
    \]
    Then we can get the distribution $P_i$ for the single tree $T_i$.
    We use that $\mu$ is a standard model, so  the tree parameter $T_i$ and numerical
    parameters are generically identifiable.
\end{proof}

The next Lemma closely follows the proof of Theorem 4.6 in \cite{Rhodes2012}.

\begin{lemma}\label{lem:findfrom2splits}
    Let $\mu$ be a standard Markov model that satisfies $RP(r,k)$ and
    the No Shuffling Property.  
    Suppose that the trees $\mathbf{T} = (T_1, \ldots, T_r)$ have
    a known tripartition with $A|B|C$ with $\#A \geq \#B \geq k$ such that
    $A|B \cup C$ and $B | A \cup C$ are valid splits in all trees $T_1, \ldots, T_r$.
        Then both $\mathbf{T}$ and the numerical parameters of the $\mu$-mixture model
    on $\mathbf{T}$ are generically identifiable.  
\end{lemma}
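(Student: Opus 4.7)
The plan is to follow the strategy of Lemma~\ref{lem:commontripart}, but to replace the single hidden state at a common tripartition vertex by the pair of hidden states $(X_{v_i^A}, X_{v_i^B})$ sitting at the splits $A|BC$ and $B|AC$ in each tree. For each $i$, let $v_i^A$ (resp.\ $v_i^B$) be the vertex on the $BC$-side (resp.\ $AC$-side) of the edge of $T_i$ realizing split $A|BC$ (resp.\ $B|AC$). Since $X_{v_i^A}$ separates $A$ from the remaining leaves and $X_{v_i^B}$ separates $B$ from the remaining leaves, the joint distribution takes the form
\[
P_i(a,b,c) \;=\; \sum_{x,y \in [\kappa]} \rho_i(x,y)\, m^A_{i,x}(a)\, m^B_{i,y}(b)\, m^C_{i,xy}(c),
\]
where $\rho_i$ is the joint distribution of $(X_{v_i^A},X_{v_i^B})$ and the $m$'s are the corresponding conditional leaf distributions. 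Summing over the $r$ trees and weighting by the mixing vector $\pi$ expresses $\Flat_{A|B|C}(P)$ as a triple product with $r\kappa^2$ rank-one summands indexed by $(i,x,y)$, in which the $A$-factor depends only on $(i,x)$ and the $B$-factor only on $(i,y)$.

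The next step is to use the two 2-way flattenings $\Flat_{A|BC}(P)$ and $\Flat_{B|AC}(P)$, both of rank $r\kappa$ by $RP(r,k)$, to pin down the row spaces spanned by the distinct rows $\mathbf{m}^A_{i,x}$ and $\mathbf{m}^B_{i,y}$. Together with the fact that $\mu$ is standard---which forces the associated $C$-side coefficient matrix built from the $\rho_i(x,y) m^C_{i,xy}$ to have Kruskal rank at least $2$---this puts us in position to run a Kruskal-style uniqueness argument mirroring the proof of Theorem~4.6 of \cite{Rhodes2012}: one obtains that the $r\kappa$ effective row-blocks on the $A$-side, the analogous $r\kappa$ row-blocks on the $B$-side, and their matched $C$-slices are recovered up to simultaneous permutation and scaling. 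The No Shuffling Property then groups the $r\kappa$ recovered $A$-rows into $r$ blocks of $\kappa$ rows each, one block per tree; because the Kruskal matching is simultaneous, the same partition transfers to the $B$- and $C$-sides, so we recover each $\pi_i P_i$. Since $\mu$ is standard, each tree $T_i$ and its numerical parameters are then identifiable from the single-tree distribution $P_i$, and the mixing weights are recovered by summing entries of $\pi_i P_i$.

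The main obstacle is that the joint hidden variable $(X_{v_i^A},X_{v_i^B})$ produces $\kappa^2$ rank-one summands per tree with many coincident rows in the $A$- and $B$-factors, so a naive application of Kruskal's theorem to the $r\kappa^2$-term decomposition fails---the $A$- and $B$-matrices have Kruskal rank only $1$ due to exact row repetitions. The delicate part is therefore the aggregation step: one must reorganize the triple-product decomposition so that it presents as $r\kappa$ effective summands (equivalently, so that the row spaces recovered from $\Flat_{A|BC}(P)$ and $\Flat_{B|AC}(P)$ play the role of $M_A$ and $M_B$ in the Kruskal setup), while tracking how $\rho_i(x,y)$ couples the $A$- and $B$-sides through the $C$-slices. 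Making this rigorous---and verifying that the generic Kruskal rank bounds genuinely hold for our equivariant model $\mu$ under $RP(r,k)$ and standardness---is the technical heart, directly paralleling the corresponding adaptation in \cite{Rhodes2012}.
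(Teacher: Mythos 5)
Your setup is fine as far as it goes: the decomposition $P_i(a,b,c)=\sum_{x,y}\rho_i(x,y)\,m^A_{i,x}(a)\,m^B_{i,y}(b)\,m^C_{i,xy}(c)$ is correct, and you correctly diagnose that Kruskal's theorem cannot be applied to the resulting $r\kappa^2$-term decomposition because the $A$- and $B$-factor matrices have repeated rows. But the proof stops exactly there: the ``aggregation step'' you defer to is not a technicality, it is the whole content of the lemma, and the particular reorganization you propose does not exist in general. If $\Flat_{A|B|C}(P)$ admitted a rank-one decomposition with $r\kappa$ summands in which the $A$- and $B$-factors have full row rank, then in particular $\rank\,\Flat_{A\cup B|C}(P)\leq r\kappa$; but when $C|A\cup B$ is not a split of some $T_i$ (which is allowed here --- only $A|B\cup C$ and $B|A\cup C$ are assumed), that flattening generically has rank exceeding $r\kappa$ (this is exactly the mechanism of Lemma \ref{lem:flatrank}(2)). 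So there is no $r\kappa$-term triple product on the full tripartition $A|B|C$ to which a Kruskal-style uniqueness argument could be applied, and knowing only the row spaces of the two-way flattenings $\Flat_{A|B\cup C}(P)$ and $\Flat_{B|A\cup C}(P)$ does not by itself split those spaces into the individual vectors $m^A_{i,x}$, $m^B_{i,y}$.

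The paper's proof avoids this obstruction by shrinking $C$ rather than keeping it: fix a single $c\in C$ and marginalize $P$ to $D(c)=A\cup B\cup\{c\}$. In the restricted trees the two assumed splits force a vertex inducing the common tripartition $A|B|\{c\}$, so Lemma \ref{lem:commontripart} (Kruskal plus No Shuffling, with $RP(r,k)$ giving $M_A$, $M_B$ full rank $r\kappa$ and standardness giving Kruskal rank $\geq 2$ on the $\{c\}$-side) recovers $M_A$ up to the allowed equivalences. The second, equally essential step that your proposal is missing is the transfer back to the full leaf set: since $M_A$ has full row rank it has a right inverse $Q_A$, and writing $\Flat_{B\cup C|A}(P)=M^T_{B\cup C}\,\Pi\,R^{-1}M_A$ one computes $\Flat_{B\cup C|A}(P)\,Q_A=M^T_{B\cup C}\,\Pi\,R^{-1}$, whose $\kappa$-column blocks each involve a single mixture component; multiplying back by the matching rows of $M_A$ reassembles each $\pi_iP_i$ on all of $[n]$, after which the mixing weights and the single-tree parameters are identified. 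Without the marginalization to $A|B|\{c\}$ and without this right-inverse unmixing argument, your outline does not yield the conclusion.
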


\begin{proof}
    Let $P$ be a generic distribution of the $\mu$ mixture model.
    Fix some $c \in C$, let $D(c) = A \cup B \cup \{c\}$.
    Consider the marginalization of $P$ to the set $D(c)$, and call this
    distribution $P_c$. This is the probability tensor for the induced $r$-tuple of trees 
    \[
    \mathbf{T}|_{D(c)}  = (T_1|_{D(c)}, \ldots, T_r|_{D(c)}).  
    \]
    Note that all the trees of $\mathbf{T}|_{D(c)} $ share the common triparition
    $A|B|\{c\}$, which satisfies the conditions of Lemma \ref{lem:commontripart},
    so that tree parameters and numerical parameters are generically
    identifiable on this subset of the leaves.  We also have that
    \[
    \Flat_{A|B|\{c\} }(P_c)  =  [M_A, M_B, M_c].
    \]
    Since $\mu$ satisfies the rank property $RP(r,k)$, we can assume that
    $M_A$ has full rank.  Thus, there is a right inverse matrix $Q_A$
    with the property that $M_A Q_A = I_{r \kappa}$, where $I_{r \kappa}$
    is the $r \kappa$ identity matrix.  Our goal is to use the matrix $Q_A$
    to finish the disentangling of distributions that go into $P$.

    Now, since $A|B \cup C$ is a valid split in all the trees $\mathbf{T}$,
    we can write a factorization
    \[
    \Flat_{B \cup C | A}(P)  = M^T_{B \cup C} \Pi \widetilde{M_A}
    \]
    where 
    $\tilde{M_A}$  and $M_{B \cup C}$ are stochastic matrices of probabilities
    of the states of the leaves in $A$ and $B\cup C$ conditioned which tree $T_i$
    we are in and on the states at the root $w_i$ in each tree where $T_i|_A$ attaches
    to the rest of $T_i$.  The matrix $\Pi$ is a diagonal matrix
    whose entries are the root distribution probabilities times the mixing weights.  
    A key feature is that the order of all components can be taken so that
    all the parameters associated to a particular tree $T_i$ can be assumed
    to be in the same block of rows.  

    Now we can write $M_A  =  R\widetilde{M_A} $ where $R$ is a block diagonal
    matrix whose $i$th block gives the conditional probability
    of state changes from the root $w_i$ to the adjacent vertex in $T_i$
    away from $A$.  We can assume that this matrix is invertible since
    the blocks are just $\mu$-transition matrices, and generically those are
    nonsingular since $\mu$ is standard.  

    Now we can compute
    \[
    \Flat_{B\cup C|A}(P) Q_A  =  M^T_{B \cup C} \Pi R^{-1} M_A Q_A  =  M^T_{B\cup C} \Pi R^{-1}.  
    \]
    This shows that taking the columns of $\Flat_{B \cup C | A}(P) Q_A$ in blocks of 
    $\kappa$ we obtain entries associated to only one mixture component at
    a time.  Multiplying a block of those columns by the associated corresponding
    rows of $M_A = R \widetilde{M_A}$ be obtain a single mixture component
    $\pi_i  P_i$ from the single tree $T_i$, multiplied by the mixing weight
    $\pi_i$.   We can identify $\pi_i$ by summing all entries of this tensor,
    as in the proof of Lemma \ref{lem:commontripart}.  Then the fact that
    the tree and numerical parameters are generically identified for the model $\mu$
    for a single tree completes the proof.  
\end{proof}

Now we are in a position to combine all components to complete the
proof of the main structural theory on identifiability of mixture models.

\begin{proof}[Proof of Theorem \ref{thm:main}]
    Let $P$ be a generic distribution from a mixture model with
    some $\mathbf{T} = (T_1, \ldots, T_r) \in \mathcal{T}( r,n,k)$.  
    We need to show that from $P$ alone, we can find  $\mathbf{T}$
    and the numerical parameters of the model.  According to Lemma \ref{lem:find2splits},
    it is possible to use ranks of flattenings to find a tripartiion $A|B|C$ of the
    leaf set such that $A|B \cup C$ and $B|A \cup C$ are valid splits in all
    the trees $T_i$, and both $\#A \geq k$ and $\#B \geq k$.  
    Once those are identified, Lemma \ref{lem:findfrom2splits}
    shows that it is possible to identify the trees $T_i$, the mixing
    weights, and the numerical parameters for each tree.    
\end{proof}


\section{Rank property, Extended Rank Property, and Standard Property
for the Jukes-Cantor Model}  \label{sec:rankproperty}

In this section, we will prove that the Jukes-Cantor
model satisfies the Rank Property, Extended Rank Property,
and Standard condition with appropriate conditions
on $r$ and $k$.  While this might seem narrow,
a key observation is that if any of these properties are satisfied
for a certain model $\mu$, they are also satisfied for all
models $\mu'$ that contain $\mu$ as submodels.  Since the
Jukes-Cantor model is contained in all the equivariant models
as a submodel, this will prove that those three properties 
are also satisfied for those models.

First we will prove that the models under consideration
in this paper are all standard models, the most straightforward 
property to prove for a model.

\begin{lemma}\label{lem:JCnontrivial}
    The JC, K2P, K3P, SSM, and GMM models are all standard Markov models.
\end{lemma}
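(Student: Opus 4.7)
The plan is to verify the three defining conditions of standardness for each of the listed models: (i) generic identifiability on trees, (ii) membership of the identity matrix, and (iii) existence of two transition matrices $M_1,M_2$ in the model whose columns pairwise satisfy the stated linear-independence condition. The key structural observation is that the JC model sits as a submodel of K2P, K3P, SSM, and GMM, so any concrete transition matrix exhibited inside JC automatically belongs to all five models. Hence once (ii) and (iii) are established inside JC, they transfer for free, and only (i) needs to be addressed on a model-by-model basis.

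For (i), I would cite the classical generic identifiability results already invoked in the introduction (in particular \cite{CHANG199651} for GMM, with the equivariant submodels JC, K2P, K3P and SSM covered by subsequent results). For (ii), the identity matrix $I_\kappa$ arises in JC by setting the diagonal parameter $\alpha=1$ and the off-diagonal parameter $\beta=0$, which satisfies the row-sum constraint $\alpha+3\beta=1$; it therefore lies in every larger model as well.

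The remaining content is (iii). I plan to take two JC matrices $M_1,M_2$ with parameters $(\alpha_1,\beta_1)$ and $(\alpha_2,\beta_2)$ chosen so that $\alpha_i\neq\beta_i$ for $i=1,2$ and so that $\alpha_1/\alpha_2\neq\beta_1/\beta_2$; a concrete pair like $(\alpha_1,\beta_1)=(0.7,0.1)$ and $(\alpha_2,\beta_2)=(0.4,0.2)$ will do. Every column of a JC matrix is a permutation of the tuple $(\alpha,\beta,\beta,\beta)$. Two distinct columns of $M_i$ are then permutations of the same tuple differing in exactly two coordinates, so they are linearly independent whenever $\alpha_i\neq\beta_i$. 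For a column of $M_1$ and a column of $M_2$ at different ``diagonal'' indices, proportionality $(\alpha_1,\beta_1,\beta_1,\beta_1)=\lambda(\beta_2,\alpha_2,\beta_2,\beta_2)$ forces $\alpha_2=\beta_2$, ruled out by the choice above; at the same index, proportionality forces $\alpha_1/\alpha_2=\beta_1/\beta_2$, also excluded. This covers every case.

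Since the whole argument is carried out inside JC, the witnesses $I_\kappa$, $M_1$, $M_2$ simultaneously belong to K2P, K3P, SSM, and GMM, completing the verification for all five models. There is no conceptual obstacle; the only care needed is the case bookkeeping in the column-independence check, which is a single-line computation in each case and is handled uniformly by the two genericity inequalities $\alpha_i\neq\beta_i$ and $\alpha_1\beta_2\neq\alpha_2\beta_1$.
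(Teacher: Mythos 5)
Your proof is correct and follows essentially the same strategy as the paper: verify the three conditions of standardness inside the Jukes--Cantor model and let the submodel containments transfer the witnesses to K2P, K3P, SSM, and GMM. The only cosmetic difference is your choice of witness pair for the column-independence condition (two generic JC matrices with $\alpha_i\neq\beta_i$ and $\alpha_1\beta_2\neq\alpha_2\beta_1$, rather than the paper's choice of the identity matrix together with one full-rank JC matrix), and your case analysis checks out.
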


\begin{proof}
    All five models contain the identity matrix, and are known to be
    identifiable on trees 
    (e.g. using the tensor rank arguments from \cite{Allman2011}).
    
    Recall that the Jukes-Cantor model consists of all transition
    matrices that have one value $b$ for all off diagonal entries
    and a different value $a$ for all diagonal entries.  
    Clearly, the Jukes-Cantor model contains the identity matrix setting
    $b = 0$ and $a = 1$.  For the second property related to Kruskal
    ranks we can take the identity matrix, together with any other
    matrix in the model that does not have rank $1$  (e.g.~ take
    $b = \epsilon$ and $a = 1 - (\kappa - 1) \epsilon$).  
    This pair of matrices will satisfy the condition on independence
    of column vectors.
\end{proof}

A key fact that we will use (and that was also used in the proofs in 
\cite{Rhodes2012}), is that matrices that are generalized Vandermonde matrices
have full rank for generic choices of parameters.

\begin{definition}
Let $x^{u_1}, \ldots, x^{u_r}$  be  monomials in $\cc[x_1, \ldots, x_n]$.  
Let $v_1, \ldots, v_s \in \cc^n$ vectors.  Then the matrix 
\[
V(u_1, \ldots, u_r; v_1, \ldots, v_s)  =
\begin{pmatrix}
    v_1^{u_1} &  v_1^{u_2} & \cdots  & v_1^{u_r} \\
    \vdots  & \vdots  & \ddots  & \vdots  \\
    v_s^{u_1} & v_s^{u_2} & \ddots & v_s^{u_r} 
\end{pmatrix}
\]
is
called the \emph{generalized Vandermonde matrix}.
\end{definition}

\begin{prop}\label{prop:vandermonde}
    Let $x^{u_1}, \ldots, x^{u_r}$  be  distinct monomials in $\cc[x_1, \ldots, x_n]$.  
Let $v_1, \ldots, v_r \in \cc^n$ generic vectors.  Then the generalized
Vandermonde matrix $V(u_1, \ldots, u_r; v_1, \ldots, v_r)$ has rank $r$.
\end{prop}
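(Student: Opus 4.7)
The plan is to prove this by induction on $r$, using the fact that the determinant is a polynomial in the entries of the $v_i$, so it suffices to exhibit a single $(v_1,\ldots,v_r)$ at which the determinant is nonzero: the set where $\det V(u_1,\ldots,u_r;v_1,\ldots,v_r) \neq 0$ is Zariski open in $(\cc^n)^r$, and once it is nonempty it is dense, which is precisely the genericity statement we want.

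The base case $r=1$ is immediate: $V(u_1;v_1) = (v_1^{u_1})$ is nonzero whenever $v_1$ has all nonzero coordinates (or even just those in the support of $u_1$). For the inductive step, assume the result for $r-1$. Applying the induction hypothesis to the first $r-1$ monomials $x^{u_1},\ldots,x^{u_{r-1}}$, I can choose $v_1,\ldots,v_{r-1} \in \cc^n$ so that the $(r-1)\times(r-1)$ minor
\[
M_r := \det V(u_1,\ldots,u_{r-1};v_1,\ldots,v_{r-1})
\]
is nonzero. With these first $r-1$ rows fixed, I regard $\det V(u_1,\ldots,u_r;v_1,\ldots,v_r)$ as a polynomial in the free vector variable $v_r$. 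Expanding along the last row gives
\[
\det V \;=\; \sum_{j=1}^r (-1)^{r+j}\, v_r^{u_j}\, M_j,
\]
where $M_j$ is the minor of the first $r-1$ rows with column $j$ deleted. In particular the coefficient of $v_r^{u_r}$ is $\pm M_r \neq 0$.

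Now I would use that $x^{u_1},\ldots,x^{u_r}$ are distinct monomials and hence linearly independent elements of $\cc[x_1,\ldots,x_n]$, so the expression above is a nonzero polynomial in $v_r$. Any nonzero polynomial in $n$ variables is nonzero at some point of $\cc^n$ (and on a Zariski dense open subset), so there exists a choice of $v_r$ that makes $\det V \neq 0$. This completes the induction. There is no real obstacle here; the only step requiring care is making sure that after fixing $v_1,\ldots,v_{r-1}$ generically the specific minor $M_r$ (not just some minor) is nonzero, which is exactly what the inductive hypothesis delivers when applied to the monomials indexed by $u_1,\ldots,u_{r-1}$.
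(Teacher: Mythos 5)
Your proof is correct, but it takes a different route from the paper. The paper proves the proposition by a monomial substitution: since the exponent vectors $u_1,\ldots,u_r$ are distinct, one can choose integers $d=(d_1,\ldots,d_n)$ so that the dot products $u_j\cdot d$ are pairwise distinct, and the substitution $x_i = t^{d_i}$ turns each row of $V$ into a row $(t_s^{u_1\cdot d},\ldots,t_s^{u_r\cdot d})$ of a univariate generalized Vandermonde matrix with distinct exponents, whose generic full rank is then quoted as a known fact. You instead argue directly by induction on $r$: cofactor expansion of $\det V$ along the last row expresses it as $\sum_j (-1)^{r+j} v_r^{u_j} M_j$, the inductive hypothesis guarantees the coefficient $M_r$ of $v_r^{u_r}$ is nonzero for a suitable choice of $v_1,\ldots,v_{r-1}$, and linear independence of distinct monomials then makes this a nonzero polynomial in $v_r$, hence nonzero at some point; openness of the nonvanishing locus converts the single witness into genericity. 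Your argument is more self-contained --- it does not require choosing the weight vector $d$ or invoking the univariate case, which itself is typically proved by essentially the same induction --- at the cost of being a bit longer. The paper's reduction is shorter on the page but leaves both the existence of $d$ (the bad set is a finite union of hyperplanes $(u_i-u_j)\cdot d=0$, which cannot exhaust $\zz^n$) and the univariate statement to the reader. Both are complete proofs of the same statement; there is no gap in yours.
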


\begin{proof}   
Since all vectors $u_1, \ldots, u_r$ are distinct, there are integers is an assignment of 
variables $x_i = t^{d_i}$, so that the resulting powers of the variable $t$
$t^{u_1 d}$, $t^{u_2 d}$, $\ldots$ are all distinct.  Then the result follows
from the fact that the standard Vandermonde matrix has full rank for generic values.
\end{proof}

More generally, we have the following useful fact about the matroid that
is determined by a generalized Vandermonde matrix.  

\begin{prop}\label{lem:vanderzeros}
Let $x^{u_1}, \ldots, x^{u_r}$  be  not necessarily 
distinct monomials in $\cc[x_1, \ldots, x_n]$.  Suppose that there are $l$
distinct monomials among them.
Let $v_1, \ldots, v_s \in \cc^n$ generic vectors.     Then
any nonzero vector in the row span of 
$V(u_1, \ldots, u_r; v_1, \ldots, v_s)$ has at least  $l - s+1$ nonzero entries.  
\end{prop}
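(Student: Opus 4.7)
The plan is a two-step reduction followed by a direct appeal to Proposition~\ref{prop:vandermonde}. Step one removes the duplication among the monomials $x^{u_1},\dots,x^{u_r}$, reducing to the case of $l$ distinct monomials. Step two shows that, in the distinct case, any nonzero vector in the row span misses at most $s-1$ coordinates, which matches the desired bound.

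For step one, I would observe that whenever $x^{u_i}=x^{u_j}$ the columns $V_{\cdot,i}$ and $V_{\cdot,j}$ of $V$ are identical, so any vector $w = c^{T}V$ in the row span satisfies $w_i=w_j$. Selecting one representative column per distinct monomial produces an $s\times l$ submatrix $V' = V(u_{i_1},\dots,u_{i_l};v_1,\dots,v_s)$ whose monomial exponents are pairwise distinct, and the deduplication map $w\mapsto w'$ sets up a bijection between the row spans of $V$ and $V'$. If the $j$th distinct monomial appears with multiplicity $m_j\ge 1$ in $u_1,\dots,u_r$, then the number of nonzero entries of $w$ equals $\sum_{j:\,w'_j\ne 0} m_j \ge |\{j : w'_j\ne 0\}|$. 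Consequently, it suffices to prove the bound for vectors in the row span of $V'$.

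For step two, I would argue by contradiction. Suppose $w' = c^{T}V'$ is nonzero but has fewer than $l-s+1$ nonzero entries, so there exist $s$ indices $j_1,\dots,j_s$ where $w'$ vanishes. Then $c\in\cc^s$ is orthogonal to the $s$ columns of $V'$ indexed by $j_1,\dots,j_s$. These columns form the $s\times s$ generalized Vandermonde matrix $V(u_{j_1},\dots,u_{j_s};v_1,\dots,v_s)$ on distinct monomials with generic $v_t$, which by Proposition~\ref{prop:vandermonde} has rank $s$. Hence its columns span $\cc^s$ and $c=0$, contradicting $w'\ne 0$. Therefore $w'$ has at least $l-s+1$ nonzero entries, and the corresponding $w$ has at least as many.

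I do not expect a serious obstacle: the statement is essentially a matroid-theoretic strengthening of Proposition~\ref{prop:vandermonde}. The only subtlety is that I need the full-rank conclusion simultaneously for every one of the $\binom{l}{s}$ choices of $s$ columns from $V'$; this is automatic because a finite intersection of nonempty Zariski-open conditions on $(v_1,\dots,v_s)$ is still nonempty and open, so a single generic choice of the $v_t$ suffices.
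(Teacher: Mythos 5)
Your proof is correct and follows essentially the same route as the paper: reduce to distinct monomials, then argue by contradiction that a row-span vector vanishing on $s$ columns would annihilate an invertible $s\times s$ generalized Vandermonde submatrix (Proposition \ref{prop:vandermonde}), forcing the coefficient vector to be zero. Your phrasing via orthogonality of $c$ to the selected columns is a cleaner rendering of the paper's argument with the auxiliary invertible matrix $M$, and your explicit remark that a single generic choice of $v_1,\dots,v_s$ handles all $\binom{l}{s}$ column selections simultaneously is a point the paper leaves implicit.
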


\begin{proof}
    We can assume that there are no repeats in the list of monomials, since
    any repeats must necessarily yield repeated nonzero entries of vectors in the
    row span. 
    Suppose that there is a nonzero vector $x$ in the row span of $V = V(u_1, \ldots, u_r; v_1, \ldots, v_s)$ that had $s$ or more nonzero entries.  Let $S$ be a set
    of exactly $s$ of those entries.  After permuting columns, we can assume
    that those are the first $s$ entries of $x$.  The fact that $x$ is a nonzero
    vector of the row span means that there is an invertible $s \times s$ matrix $M$
    such that $MV$ has $x$ as the bottom row.   Consider the square-submatrix
     $V' =  V(u_1, \ldots, u_s; v_1, \ldots, v_s)$ obtained by taking the
     first $s$ columns of $V$.  By Proposition \ref{prop:vandermonde}
     $V'$ is invertible,  so $MV'$ is also invertible.   But $MV'$ has a row of all zeroes (from taking the subvector of $x$).  
     This shows that there must be at least $r - s + 1$ non-zero entries in $x$.
\end{proof}

\begin{lemma}\label{lem:rpJC}
    The Jukes-Cantor model satisfies the rank property $RP(r,k)$
    for $r \leq 2^{k-1} - k + 1$ when $\kappa \geq 3.$ 
\end{lemma}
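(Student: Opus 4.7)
The plan is to exhibit a single specialization of the continuous parameters at which the stacked matrix $Q$ has rank $r\kappa$. Because rank is lower semicontinuous in the parameters and $Q$ has only $r\kappa$ columns, this will force the generic rank to equal $r\kappa$. For each tree $T_i$ I first set every internal-edge transition matrix equal to $I_\kappa$. This contracts the internal edges, so $P_i$ becomes the distribution on the star tree on $[k]$ with JC transition matrices only on its pendant edges $e_1,\ldots,e_k$.

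Next, pass to the eigenbasis simultaneously diagonalizing every JC transition matrix: the $1$-dimensional trivial summand contributes eigenvalue $1$, and the $(\kappa{-}1)$-dimensional standard summand contributes a single parameter $\beta_e := 1 - \kappa b_e$. In this basis the flattening $\Flat_{[k-1]|k}(P_i)$ is supported only on indices $(\chi_1,\ldots,\chi_k)$ whose characters multiply to the trivial one, and the surviving entry equals $\prod_{j:\, \chi_j \text{ nontrivial}} \beta_{e_j}^{(i)}$. Grouping rows of $Q$ by the value of $\chi_1\cdots\chi_{k-1}$ and columns by $\chi_k$ exhibits $Q$ as block-diagonal with $\kappa$ blocks, one per character value at leaf $k$; each block is of size $\kappa^{k-2}\times r$, so the total rank is $r\kappa$ iff every block has rank $r$.

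Focus on the block where $\chi_k$ is trivial. Two rows carry the same entry in every column iff their supports $S := \{j : \chi_j \text{ nontrivial}\}\subseteq[k-1]$ coincide. The parity constraint rules out $|S|=1$, and the hypothesis $\kappa\ge 3$ guarantees that every $S\subseteq[k-1]$ with $|S|=0$ or $|S|\ge 2$ is realized, giving exactly $2^{k-1}-k+1$ distinct monomial rows. Substitute $\beta_{e_j}^{(i)} = x_i^{2^{j-1}}$: the row indexed by $S$ becomes $(x_1^{b_S},\ldots,x_r^{b_S})$ with $b_S := \sum_{j\in S}2^{j-1}$; binary expansions separate subsets, so the exponents $b_S$ are pairwise distinct. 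Selecting any $r\le 2^{k-1}-k+1$ of these rows gives an $r\times r$ generalized Vandermonde matrix with distinct exponents, which has rank $r$ for generic $(x_1,\ldots,x_r)$ by Proposition \ref{prop:vandermonde}. In the $\kappa-1$ blocks where $\chi_k$ is nontrivial, supports of size $1$ become admissible and the analogous count yields at least $2^{k-1}-1$ distinct monomial rows, so the same Vandermonde argument gives rank $r$ a fortiori.

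The main obstacle is combinatorial: pinning down exactly $2^{k-1}-k+1$ distinct supports in the trivial block (the hypothesis $\kappa\ge 3$ is essential here to realize every support of size $\ge 2$ in the prescribed group) and checking that the binary-encoding scheme $b_S$ separates all of them. Once these combinatorial facts are in place, the Vandermonde step is immediate, and the fact that $I_\kappa$ is itself a JC transition matrix certifies that the whole specialization lies in the closure of the JC parameter space.
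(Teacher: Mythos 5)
Your proposal is correct and follows essentially the same route as the paper: contract internal edges via identity matrices to reduce to star trees, diagonalize (Fourier transform), exhibit the $\kappa$ diagonal blocks indexed by the character at leaf $k$, count the distinct monomial rows (with the worst case $2^{k-1}-k+1$ in the trivial block), and conclude by a generalized Vandermonde rank argument. Your explicit substitution $\beta^{(i)}_{e_j}=x_i^{2^{j-1}}$ simply re-derives Proposition \ref{prop:vandermonde} inline, so the arguments coincide in substance.
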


\begin{proof}
  Let $r$ and $k$  satisfy $r \leq 2^k - k+1$.  We must show that
  for any trees $T_1, \ldots, T_r$ on $k$ leaves, and generic probability
distributions $P_i \in M^\mu_{T_i}$, the matrix
\[
\begin{pmatrix} \Flat_{[n-1]|n}(P_1) & \Flat_{[n-1]|n}(P_2)  & \cdots & \Flat_{[n-1]|n}(P_r)
\end{pmatrix}
\]
has rank $r \kappa$.
However, by the fact that rank of a matrix being $\leq \alpha$ is a closed condition,
it suffices to show that there is a single choice of parameters that gives the rank
$r \kappa$.  

Suppose that $T$ is a tree.  
Note that if we set the transition matrix on an edge to be the identity
matrix (which is a transition matrix in the Jukes-Cantor model), that will give
a probability distribution on a tree obtained from $T$ by contracting the
corresponding edge.  Hence, if we set the transition matrices of
all internal edges of the tree to be the identity matrix, this will give
us distributions on the star tree.  So distributions on the star
tree appear as distributions in the model on any tree.
Thus the result follows if we prove the Lemma when $T_1 = T_2 = \cdots = T_r$
are all the star trees.

To this end, let $T_1, \ldots, T_r$ be star trees with $k$ leaves,
and consider the matrix 
\[
M = \begin{pmatrix} \Flat_{[k-1]|k}(P_1) & \Flat_{[k-1]|k}(P_2)  
   & \cdots & \Flat_{[k-1]|k}(P_r)
\end{pmatrix}.
\]
Our first step is to apply the Fourier transform to the parametrization.  Let
$Q_i$ be the Fourier transformation of the probability distribution $P_i$.  
 The Fourier transform is linear, and
it transforms the matrix $M$ into a new matrix $M'$ which is
\[
M' = \begin{pmatrix} \Flat_{[k-1]|k}(Q_1) & \Flat_{[k-1]|k}(Q_2)  & \cdots & 
\Flat_{[k-1]|k}(Q_r)
\end{pmatrix}.
\]
The new matrix $M'$ is obtained from $M$ by row and column operations,
so $M$ and $M'$ have the same rank.

Now we analyze the parameterization in the Fourier coordinates.  For the Jukes-Cantor model on a star tree $T_i$ we have that
\[
q(g_1, \ldots, g_k)   = 
\begin{cases}  \prod_{i = j}^k a^{(i, j)}_{g_j}  &  \sum_{j = 1}^k g_j  = 0  \\
0  &   \mbox{otherwise}
\end{cases}
\]
where we have a set of parameter $a^{(i,j)}_g$ for each tree $T_i$, each edge $j$,
and each group element $g \in G$.  For the Jukes-Cantor model
we have that 
\[
a^{(i,j)}_{(1,0)} = a^{(i,j)}_{(0,1)} = a^{(i,j)}_{(1,1)} 
\]
for all $i$ and $j$.  We can rearrange rows and columns of $M'$ so that
is has a block form, grouping all the columns by the value of  $g_k$,
and grouping the rows so that, for a fixed value of $g_k$
we have all the $(g_1, \ldots, g_{k-1})$ so that $q(g_1, \ldots, g_k)  \neq 0$
together.  

After this rearrangement of rows and columns, $M'$ will be a block diagonal
matrix, with $\kappa$ blocks, each block of size $\kappa^{k-2} \times r$.
We need to show that each of these blocks has full rank, so we get that
the total rank is $r \times \kappa$ as desired.  

To show that the block matrices have the appropriate rank, we note that
each such matrix is a generalized Vandermonde matrix.  Indeed,
each entry of the matrix is a monomial, and each column is an identical copy 
of the first column, but with new variables.  Note, however,
that the condition $a^{(i,j)}_{(1,0)} = a^{(i,j)}_{(0,1)} = a^{(i,j)}_{(1,1)}$
will yield repeated monomials in each column.  Thus to complete
the proof, we need to figure out how many distinct monomials
there are, so we can apply Proposition \ref{prop:vandermonde}.

Consider the map from 
\[
\phi:  \zz_2 \times \zz_2 \rightarrow \{0,1\},  
\phi(g,h)  =   \begin{cases}
    0  &  \mbox{ if }  g = h = 0  \\
    1  &  \mbox{ otherwise}.
\end{cases}
\]
Then two monomials $q(g_1, \ldots, g_k)$  and $q(h_1, \ldots, h_k)$
are identical if and only if  $\phi(g_i) =  \phi(h_i)$ for all $i$.
So we just need to count the number of equivalence classes for each
fixed value of $g_k$.

This is straightforward to do:  if $\phi(g_k) = 0$, then
$(\phi(g_1), \ldots, \phi(g_{k-1}))$ can be any string in $\{0,1\}^{k-1}$
except the strings that have exactly one $1$.  There are $2^{k-1} - k + 1$
such strings.  On the other hand, if $\phi(g_k) \neq 0$ then $(\phi(g_1), \ldots, \phi(g_{k-1}))$ can be any string in $\{0,1\}^{k-1}$ except the string with all zeroes.
There are $2^{k-1} - 1$ such strings.  We need to take the smaller of these
two values to get a consistent rank across all the blocks.
Hence this shows that the Jukes-Cantor model satisfies
$RP(r,k)$ with $r \leq 2^{k-1} - k + 1$.
\end{proof}

\begin{corollary}
    The Kimura 2-parameter model, the Kimura 3-parameter model, 
    and the strand symmetric model all satisfy the rank property
    $RP(r,k)$ with $r \leq 2^{k-1} - k + 1$.
\end{corollary}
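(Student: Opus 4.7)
The plan is to derive this corollary from Lemma \ref{lem:rpJC} by a straightforward submodel-inclusion argument, exploiting the fact that full column rank is a Zariski-open condition. The key observation, already flagged in the preamble to this section, is that JC transition matrices are simultaneously K2P, K3P, and SSM transition matrices: inspecting the parametrizations displayed in Section \ref{trees and splits}, a Jukes-Cantor matrix is obtained from the K3P template $M^{K3P}$ by setting $\beta = \gamma = \delta$, and from $M^{SSM}$ by additionally identifying the two rows of ``$\alpha,\beta,\gamma,\delta$'' type. So for any tree $T_i$, the image $M^{JC}_{T_i}$ sits inside each of $M^{K2P}_{T_i}$, $M^{K3P}_{T_i}$, $M^{SSM}_{T_i}$; equivalently, the JC parameter space $S^{JC}_{T_i}$ embeds as a (lower-dimensional) algebraic subset of each enveloping parameter space $S^{\mu}_{T_i}$.

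Next, I would argue that $RP(r,k)$ is a Zariski-open condition on the parameters. The stacked flattening in the definition is a matrix of size $\kappa^{k-1} \times r\kappa$, so the condition ``rank equals $r\kappa$'' coincides with full column rank, which is cut out by the non-vanishing of at least one $r\kappa \times r\kappa$ minor. The locus where this holds is therefore a Zariski-open subset of the irreducible product $S^{\mu}_{T_1} \times \cdots \times S^{\mu}_{T_r}$, and is either empty or dense.

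Now I would invoke Lemma \ref{lem:rpJC}: for $r \leq 2^{k-1} - k + 1$ and $\kappa \geq 3$, there exist JC parameter choices on $T_1, \ldots, T_r$ at which the stacked matrix attains rank $r\kappa$. Pulling these along the inclusion $S^{JC}_{T_i} \hookrightarrow S^{\mu}_{T_i}$, we find a point of the larger product parameter space at which full column rank holds. Thus the Zariski-open full-rank locus is nonempty inside each $S^{\mu}_{T_1} \times \cdots \times S^{\mu}_{T_r}$, hence dense, so a generic choice of parameters in the K2P, K3P, or SSM model realizes rank $r\kappa$. This is precisely $RP(r,k)$ for each of these three models with the same bound $r \leq 2^{k-1} - k + 1$.

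There is no real obstacle here beyond articulating the openness--nonemptiness step cleanly; the corollary is essentially a formal consequence of Lemma \ref{lem:rpJC} combined with the submodel inclusion $JC \subseteq K2P \subseteq K3P \subseteq SSM$ (and $JC \subseteq SSM$ directly). The only mildly subtle point is to verify irreducibility of the continuous parameter spaces $S^{\mu}_{T_i}$ so that nonempty open implies dense; but each of these is either an open subset of an affine space (K2P, K3P) or of a product of matrices with linear row-sum constraints (SSM), all of which are manifestly irreducible.
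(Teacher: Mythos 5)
Your proposal is correct and is essentially the paper's own argument: the paper likewise notes that a single witness suffices because rank drop is a closed condition, and obtains that witness by viewing the Jukes-Cantor parameters from Lemma \ref{lem:rpJC} as points of the K2P, K3P, and SSM parameter spaces. Your explicit discussion of Zariski-openness and irreducibility of $S^{\mu}_{T_i}$ simply spells out the genericity step the paper leaves implicit.
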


\begin{proof}
    We just need to show the existence of a single choice of parameters
    that give the desired rank condition.  However, since the Jukes-Cantor
    model is a submodel of all of those other models, the result of
    Lemma \ref{lem:rpJC} gives the desired result.
\end{proof}

Note that the bound $r \leq 2^{k-1} - k + 1$ is not best
possible for those other models besides the Jukes-Cantor model. Each model would require a more careful analysis to improve the results.
Following the proof, for the group-based models,
it suffices to determine the number of distinct monomials of different
types in the block structure of the matrix $M'$.  Both Kimura 2-parameter and
3-parameter models will have significantly more distinct monomials
than the Jukes-Cantor model, and so the rank property will hold for
larger values of $r$.

Now we proceed to prove the Extended rank property for the Jukes-Cantor model.
Again, that will also give a result for other models, though it is probably not
the best possible for K2P, K3P, SSM.

\begin{lemma}\label{lem:erpJC}
    The Jukes-Cantor models satisfies the extended rank property $ERP(r,k)$
    for $r \leq 2^{k-1} - k+ 1$ when $\kappa \geq 3$. 
\end{lemma}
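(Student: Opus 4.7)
The plan is to parallel the proof of Lemma \ref{lem:rpJC}, modified to accommodate the flattening of $P_1$ which groups leaves $k$ and $k+1$.  Achieving rank $(r-1)\kappa + \kappa^2$ means full column rank of the matrix, a Zariski-open condition, so it suffices to exhibit one choice of parameters within $M^{JC}_{\mathbf{T}}$ achieving this rank for each admissible $(T_1,\ldots,T_r)$.  For $T_1$, since it does not display the split $[k-1]|\{k,k+1\}$, the same splits-theoretic argument invoked in the proof of Lemma \ref{lem:flatrank} furnishes an internal edge $e$ of $T_1$ with split $C|D$ such that $k \in C$, $k+1 \in D$, and both $C' := C \cap [k-1]$ and $D' := D \cap [k-1]$ are nonempty.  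Setting every transition matrix of $T_1$ except $M^e$ to the identity collapses $T_1$ effectively to a two-internal-vertex tree; setting the internal transition matrices of each $T_i$ ($i \geq 2$) to the identity reduces $T_i$ to a star tree on $k$ leaves.

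Next I apply the discrete Fourier transform, which is a linear change of coordinates and preserves rank.  The Fourier image of the matrix block-diagonalizes by $h := \sum_{j=1}^{k-1} g_j \in G$, and each $h$-block has size $\kappa^{k-2} \times (\kappa + r - 1)$.  Its $\kappa$ columns from $P_1$ are indexed by $g_k \in G$ with $g_{k+1} = -h - g_k$, and the row entry is $\hat{f}^e(h_C + g_k)$, where $h_C := \sum_{j \in C'} g_j$; these columns depend on the row only through $h_C$ and thus span a $\kappa$-dimensional subspace of ``functions of $h_C$ alone'' for generic $\hat{f}^e$.  Each of the remaining $r - 1$ columns (from $P_i$ with $i \geq 2$) is the generalized Vandermonde monomial $a^{i,k}_{-h} \prod_{j=1}^{k-1} a^{i,j}_{g_j}$ in the Fourier parameters of $T_i$ and varies with all of $(g_1,\ldots,g_{k-1})$.

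To show each block has full column rank $\kappa + r - 1$, suppose $\sum_{g_k} \alpha_{g_k} f_{g_k} + \sum_{i \geq 2} \beta_i g_i = 0$ on the block, and restrict to a slice of rows with $h_C = s$ fixed (a slice has $\kappa^{k-3}$ rows, which is at least $2$ when $k \geq 4$ and $\kappa \geq 3$).  On such a slice the $f_{g_k}$ are constants equal to $\hat{f}^e(s + g_k)$, while the $g_i$'s vary; taking row differences within the slice kills the $\alpha$-contribution and leaves a linear relation among distinct monomials living in disjoint sets of tree-parameters (one set per tree $T_i$).  The same Vandermonde-type monomial count $2^{k-1} - k + 1$ that controls Lemma \ref{lem:rpJC} then forces every $\beta_i = 0$ for $r$ in the stated range.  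Substituting back, the surviving equation $\sum_{g_k} \alpha_{g_k} \hat{f}^e(s + g_k) = 0$ for all $s \in G$ forces each $\alpha_{g_k} = 0$, because $\hat{f}^e$ takes exactly the two distinct values $x \neq y$ in Jukes-Cantor and the resulting $\kappa \times \kappa$ linear system is invertible.  The conclusion transfers to K2P, K3P, and SSM by the same submodel argument as for the rank property.  The main obstacle will be the slice-wise monomial bookkeeping step: one must verify that after quotienting by the $\kappa$-dimensional $P_1$-span, the $r - 1$ projected $P_i$-columns remain linearly independent under the Jukes-Cantor Fourier identifications, and this is where the monomial count $2^{k-1} - k + 1$ precisely enters, matching the bound obtained for the rank property.
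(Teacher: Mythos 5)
Your setup is sound and agrees with the paper's reduction: collapse $T_2,\ldots,T_r$ to star trees, collapse $T_1$ to the two-star tree across an edge whose split $C|D$ separates $k$ from $k+1$ with $C'=C\cap[k-1]$ and $D'=D\cap[k-1]$ nonempty, Fourier transform, and observe the block structure indexed by $h=\sum_{j=1}^{k-1}g_j$, with each block of size $\kappa^{k-2}\times(\kappa+r-1)$; your description of the $P_1$-columns as pullbacks of the four translates of $\hat f^e$ along the map $\mathrm{row}\mapsto h_C$ is correct. The gap is in the step you yourself flag as ``the main obstacle,'' and it is not a bookkeeping formality. After restricting to an $h_C$-slice and differencing, the equation $\sum_{i\ge 2}\beta_i\bigl(v_i(\rho)-v_i(\rho')\bigr)=0$ is an identity of \emph{numbers} at one (generic) parameter point, with scalar coefficients $\beta_i$; the fact that the entries are distinct monomials in disjoint sets of variables does not force $\beta_i=0$. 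What is actually needed is that the $r-1$ tree-columns remain independent modulo the $\kappa$-dimensional space $\Pi$ of functions of $h_C$, and by the generic-point argument underlying Proposition \ref{prop:vandermonde} this is equivalent to $r-1\le l-\dim(V\cap\Pi)$, where $V$ is the span of vectors constant on the zero-pattern classes ($l=2^{k-1}-k+1$ distinct monomials in the worst block) and $V\cap\Pi$ consists of vectors constant on both the pattern classes and the $h_C$-slices. That intersection always contains the constants, and with your uniform choice (every pendant edge of $T_1$ set to the identity) it can be larger: if $\#C'=1$ then $h_C=g_1$ and $V\cap\Pi$ is the $2$-dimensional space of functions of $\phi(g_1)$, so the argument only yields $r\le 2^{k-1}-k$, one short of the stated bound. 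So as written the proof neither closes the key independence step nor, even if closed by the natural dimension count, recovers the claimed range of $r$ in all configurations; the $k=3$ case (slices of size one) is also left uncovered.

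This is precisely where the paper's proof diverges from yours. Rather than keeping all pendant edges of $T_1$ equal to the identity, the paper splits into three cases according to the parities of the two sides of the incompatible split and makes one pendant edge generic when needed, so that the transformed $P_1$-block contributes to each $g_k$-block only vectors supported on very few coordinates; it then concludes by the support bound of Lemma \ref{lem:vanderzeros}, which says a nonzero vector in the span of a generalized Vandermonde block cannot be supported on so few coordinates. To repair your argument you would either need to adopt such a case-dependent parameter choice for $T_1$ (restoring sparsity so that a Lemma \ref{lem:vanderzeros}-type argument applies), or carry out the quotient-space computation honestly: prove that the bipartite incidence between zero-pattern classes and $h_C$-values is connected (so $\dim(V\cap\Pi)=1$) under hypotheses that exclude the $\#C'=1$ degeneracy, and handle the excluded configurations separately.
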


\begin{proof}
Let $T_1$ be a tree on  $k+1$ leaves that does not contain the split $[k-1]| \{k, k+1\}$,
and any trees $T_2, \ldots T_r$  on $k$ leaves, and generic probability
distributions $P_i \in M^\mu_{T_i}$.  Consider the matrix the matrix
\[
M = 
\begin{pmatrix} \Flat_{[k-1]|\{k, k+1\}}(P_1) & \Flat_{[k-1]|k}(P_2)  & \cdots & \Flat_{[k-1]|k}(P_r)
\end{pmatrix}.
\]
We must show that $M$
has rank $(r-1) \kappa + \kappa^2$, generically.

    As in the proof of Lemma \ref{lem:rpJC}, it suffices to prove that there
    is a single choice of parameters that achieves the desired rank.
    Then we can set many parameters equal to identity matrices, and
    consider the resulting trees that arise by contracting those edges.
    To that end, we can assume that trees $T_2, \ldots, T_r$ are all
    $k$ leaf star trees.  
    
    As in the proof of Lemma \ref{lem:rpJC}, we apply the Fourier transform to all
    probability distributions to get a matrix 
\[
M' = 
\begin{pmatrix} \Flat_{[k-1]|\{k, k+1\}}(Q_1) & \Flat_{[k-1]|k}(Q_2)  & \cdots & \Flat_{[k-1]|k}(Q_r)
\end{pmatrix}.
\]
Looking at the final blocks we have the matrix
\[
\begin{pmatrix} \Flat_{[k-1]|k}(Q_2)  & \cdots & \Flat_{[k-1]|k}(Q_r)
\end{pmatrix}.
\]
This is the same matrix we have seen in the 
proof of Lemma \ref{lem:rpJC}, with one fewer set of columns.
So it  has rank $(r-1) \kappa$ since JC has the rank property $RP(r-1,k)$ with
these values of $r$ and $k$.  
Furthermore, after reordering rows and columns, as in the proof of Lemma \ref{lem:rpJC},
it can be broken into $4$  blocks, each of which is a generalized Vandermonde
matrix.  

The $4$ generalized Vandermonde matrices from the previous paragraph each have
size $r-1  \times 4^{k-2}$, and those generalized Vandermonde matrices each have
full rank.  We want to use Lemma \ref{lem:vanderzeros}
to complete the proof.  In particular, we will show that there is a choice
of parameters for the tree $T_1$ so that the resulting 
matrix $\Flat_{[k-1]|\{k, k+1\}}(Q_1)$
has the property that each of the four vectors that it contributes to
the support of one of the four Vandermonde submatrices has 
at most 4 nonzero entries.  This will insure that
$M'$ has the correct rank, by applying Lemma \ref{lem:vanderzeros}.
With these observations in mind, we consider various restriction
on the tree $T_1$ and matrices $\Flat_{[k-1]|\{k, k+1\}}(Q_1) $
that can be produced.

First of all, $T_1$ must have a split that is
    that is not compatible with $[k-1] | \{k, k+1\}$.  
    After relabeling the leaves, we can assume this split has the form
    \[
    A_j | B_j = \{1, \ldots, j-1, k\}|  \{j, \ldots, k-1, k+1\}  
    \]
    for some $j$ between $2$
    and $k -1$.
 We can assume that $T_1$ is the tree with only this
    one split $A_j|B_j$ as an internal split, by setting
    all other internal edges to have an identity matrix
    as the transition matrix.   We must consider a few scenarios
    based on the sizes of the sets $A_j$ and $B_j$.

{\bf Case 1:}  Both $\#A_j$ and $\#B_j$ are even.  In this case,
we take all the transition matrices associated to pendant edges
in the tree to be the identity matrix as well.  
Our assumption on $T_1$ yields a specific structure on the Fourier coordinates
    of the distributions in the model on $T_1$.  
    Considering a vector $(g_1, \ldots, g_{k+1}) \in G^{k+1}$
    we can break this into blocks $(\mathbf{h}_1, \mathbf{h}_2, g_{k}, g_{k+1} )$,
where
\[
\mathbf{h}_1  =  (g_1, \ldots, g_{j-1})  \quad  \mathbf{h}_{2} = (g_{j}, \ldots, g_{k-1}).
\]
    For a Fourier coordinate $q(\mathbf{h}_1, \mathbf{h}_2, g_{k}, g_{k+1} )$
    to be nonzero, we must have the following
    \begin{enumerate}
        \item $g_1 = g_2 = \cdots = g_{j-1} = g_{k}$
        \item  $g_j = g_{j+1} = \cdots = g_{k-1} = g_{k+1}$
        \item  $\sum_{j = 1}^{k+1} g_j = (0,0)$.
    \end{enumerate}
    Conditions (1) and (2) are coming from the fact that the only nontrivial 
    transition matrix for the model is the one corresponding to the
    split $A_j|B_j$.  Condition (3)  is the standard condition
    for Fourier coordinates in a group-based model.
    However, since both $\#A_j$ and $\#B_j$ are even, condition (3) is automatically
    satisfied by any vectors that satisfy conditions (1) and (2).  
    Note that there are
    exactly $16$ nonzero values of 
    $q(\mathbf{h}_1, \mathbf{h}_2, g_{k}, g_{k+1} )$,  one for each
    of the possible pairs $(g_k, g_{k+1}) \in G^2$.  
    Since the columns of $\Flat_{[k-1]|\{k, k+1\}}(Q_1) $
    are indexed by those pairs, we see that each column of $\Flat_{[k-1]|\{k, k+1\}}(Q_1) $
    has exactly one nonzero entry, and they appear in different rows.
    Furthermore, the rearrangement of rows and columns so that 
    \[
\begin{pmatrix} \Flat_{[k-1]|k}(Q_2)  & \cdots & \Flat_{[k-1]|k}(Q_r)
\end{pmatrix}
\]
is a block matrix has blocks indexed by the value of $g_k$.
Hence, we have that each block of $\Flat_{[k-1]|\{k, k+1\}}(Q_1) $
contributes a four dimensional column space, all of which have at most
$4$ nonzero entries.  This shows that $M'$ will have rank $(r-1)4 + 4^2$, as desired.

{\bf Case 2:}  One of $\#A_j$ and $\#B_j$ is even, and one is odd.  
We can assume that $\#A_j$ is odd and $\#B_j$ is even.  
We set all the pendant edge parameters to the identity matrix
except for the edge going to leaf $k+1$.  As in Case 1,
 we consider a vector $(g_1, \ldots, g_{k+1}) \in G^{k+1}$
    we can break this into blocks $(\mathbf{h}_1, \mathbf{h}_2, g_{k}, g_{k+1} )$,
where
\[
\mathbf{h}_1  =  (g_1, \ldots, g_{j-1})  \quad  \mathbf{h}_{2} = (g_{j}, \ldots, g_{k-1}).
\]
    For a Fourier coordinate $q(\mathbf{h}_1, \mathbf{h}_2, g_{k}, g_{k+1} )$
    to be nonzero, we must have the following
\begin{enumerate}
        \item $g_1 = g_2 = \cdots = g_{j-1} = g_{k}$
        \item  $g_j = g_{j+1} = \cdots = g_{k-1}$
        \item  $\sum_{j = 1}^{k+1} g_j = (0,0)$.
    \end{enumerate}
Note the change that we will not need $g_{k+1}$ to be equal to the other
values in Condition (2).  Hence we have three groups of coordinates, each
of which have an odd number of elements (the groups being $A_j$,  $\{k+1\}$ and
$B_j \setminus \{k+1\}$).  Let
\[
h_1 := g_1 = g_2 = \cdots = g_{j-1} = g_{k}  \quad 
\mbox{and} \quad
h_2 := g_j = g_{j+1} = \cdots = g_{k-1}
\]
then we get a valid coordinate when
\[
h_1 + h_2 + g_{k+1} = (0,0).
\]
This follows because each $h_i$ is equal to the sum of the $g_l$'s in its group, because all are equal and the number of such elements is odd.
We see that there are exactly $16$ possibly solutions (choosing values for $h_1$ and $h_2$
arbitrarily forces a value for $g_{k+1}$.  Furthermore, all possible pairs coordinates
$(g_{k}, g_{k+1})$ are possible.  Thus, as in Case 1, we see that each column of $\Flat_{[k-1]|\{k, k+1\}}(Q_1) $
    has exactly one nonzero entry, and they appear in different rows.
    Furthermore, the rearrangement of rows and columns so that 
    \[
\begin{pmatrix} \Flat_{[k-1]|k}(Q_2)  & \cdots & \Flat_{[k-1]|k}(Q_r)
\end{pmatrix}
\]
is a block matrix has blocks indexed by the value of $g_k$.
Hence, we have that each block of $\Flat_{[k-1]|\{k, k+1\}}(Q_1) $
contributes a four dimensional column space, all of which have at most
$4$ nonzero entries.  This shows that $M'$ will have rank $(r-1)4 + 4^2$, as desired.

{\bf Case 3:}  Both $\#A_j$ and $\#B_j$ are odd.
We set all the parameters corresponding to pendant edges
to the identity matrix except for leaf 1, which we allow to
be arbitrary.  As in Case 1,
 we consider a vector $(g_1, \ldots, g_{k+1}) \in G^{k+1}$,
 and we analyze which Fourier coordinates
 $q(g_1, \ldots, g_{k+1}) $ can be nonzero.  With our assumption
 on the transition matrices, we must have
     \begin{enumerate}
        \item $ g_2 = \cdots = g_{j-1} = g_{k}$
        \item  $g_j = g_{j+1} = \cdots = g_{k-1} = g_{k+1}$
        \item  $\sum_{j = 1}^{k+1} g_j = (0,0)$.
    \end{enumerate}
Note that $g_1$ does not appear in the first group.  So we have three groups
where we will have equal values
\[
h_1 := g_2 = \cdots = g_{j-1} = g_{k}  \quad \mbox{and} \quad 
h_2 := g_j = g_{j+1} = \cdots = g_{k-1} = g_{k+1}.
\] 
Since the first group $A_j \setminus \{1\}$ has an even number of elements,
the sum of these elements will always be $(0,0)$ regardless of what $h_1$ is.
On the other hand, the second group has an odd number of elements, so $h_2$ equals
the sum of all those elements.  Then by condition (3), we are forced to have
$g_1 = h_2$.  Hence, there are 16 possible Fourier coordinates that
have nonzero entries.  As in Cases 1 and 2, they allow for all 16 different
possibilities for the pairs $(g_{k}, g_{k+1})$, and will hence give
that $M'$ has rank $(r-1)4 + 4^2$ as desired.  
\end{proof}

As in the case of the rank property, we also can see a similar result
for the extended rank property, for the other equivariant models
we have studied.  We omit the proof which is the same as for the
rank property. Again, these results are probably not best possible for
those other models, and paying attention to the structure of the
generalized Vandermonde matrices that arise in the other models
can yield stronger results.

\begin{corollary}
    The Kimura 2-parameter model, the Kimura 3-parameter model, 
    and the strand symmetric model all satisfy the extended rank property
    $ERP(r,k)$ with $r \leq 2^{k-1} - k + 1$.
\end{corollary}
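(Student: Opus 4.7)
The plan is to mirror the proof of the preceding corollary for the rank property, which itself bootstrapped off Lemma \ref{lem:rpJC}. Exactly the same strategy should work here, with Lemma \ref{lem:erpJC} replacing Lemma \ref{lem:rpJC}. The guiding observation is that ``the matrix has rank at least $(r-1)\kappa + \kappa^2$'' is a Zariski-open condition on the parameter space of the mixture model (it is cut out by the nonvanishing of at least one maximal minor of the appropriate size). Since the parameter space of a mixture model for any of K2P, K3P, or SSM is irreducible, the existence of a single choice of parameters achieving the required rank is enough to conclude that the rank condition holds generically.

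First I would invoke the inclusion of models. The JC model is a submodel of K2P, K3P, and SSM, in the sense that every Jukes-Cantor transition matrix is a valid transition matrix in each of those larger models (obtained by specializing the additional free parameters). The uniform root distribution used throughout Section \ref{Fourier} is also available in each larger model. Consequently, any probability distribution $P_i \in M^{JC}_{T_i}$ is also realizable as a distribution in $M^{\mu}_{T_i}$ for $\mu \in \{\mathrm{K2P}, \mathrm{K3P}, \mathrm{SSM}\}$.

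Next I would transport the witness. Fix $r \leq 2^{k-1} - k + 1$, trees $T_2, \ldots, T_r$ on $k$ leaves, and a tree $T_1$ on $k+1$ leaves not containing the split $[k-1] \mid \{k, k+1\}$. Lemma \ref{lem:erpJC} supplies JC parameters on these trees for which the matrix
\[
M = \begin{pmatrix} \Flat_{[k-1]|\{k, k+1\}}(P_1) & \Flat_{[k-1]|k}(P_2) & \cdots & \Flat_{[k-1]|k}(P_r) \end{pmatrix}
\]
has rank exactly $(r-1)\kappa + \kappa^2$. By the preceding paragraph, that same point sits inside the parameter space of each of the larger models K2P, K3P, SSM, and produces the same matrix $M$ of the same rank there. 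Hence at that point the extended rank property is realized.

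Finally I would conclude by openness: since $\mathrm{rank}(M) \geq (r-1)\kappa + \kappa^2$ is defined by the nonvanishing of at least one minor, and since the corresponding polynomial is not identically zero on the (irreducible) $\mu$-mixture parameter space (as it is nonzero at the witness), the property $ERP(r,k)$ holds on a Zariski-dense open subset, which is exactly what genericity means. The only possible subtlety would be if the JC witness required stochastic matrices outside the strictly positive interior of one of the larger models, but by construction the JC submodel sits inside the stochastic parameter space of each larger model, so no such obstruction arises.
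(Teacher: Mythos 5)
Your proposal is correct and follows essentially the same route as the paper: the paper likewise observes that it suffices to exhibit a single parameter choice achieving the rank $(r-1)\kappa + \kappa^2$, takes that witness from Lemma \ref{lem:erpJC} via the inclusion of the Jukes-Cantor model as a submodel of K2P, K3P, and SSM, and concludes by genericity of the rank condition. Your added remarks about Zariski-openness and irreducibility simply make explicit what the paper leaves implicit.
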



\section{The No Shuffling Property}  \label{sec:noshuffle}

In this section, we prove that the models JC, K2P, K3P, and SSM
satisfy the No Shuffling Property.
Like the results from the previous section on the rank property
and extended rank property, we are able to just show the result
for the JC model and immediately deduce it for all other models.

Recall the No Shuffling Property:

\begin{definition}
A Markov model $\mu$ is said to the have the \emph{No Shuffling Property} 
if the following holds:
For any trees $T_1, \ldots, T_r$ with $n \geq 3$ leaves 
and generic probability distributions
$P_i \in M^\mu_{T_i}$, let $Q$ be the matrix
\[
Q = \begin{pmatrix} \pi_1 \Flat_{[n-1]|n}(P_1) & \pi_2 \Flat_{[n-1]|n}(P_2)  & \cdots & \pi_r\Flat_{[n-1]|n}(P_r)
\end{pmatrix}
\]
where $\pi \in \Delta_r$ be generic.  
Form a new matrix $Q'$ by taking $\kappa$ columns of $Q$, and let $P$
be the resulting tensor.  If $P \in M^\mu_T$ for some $T$, then all
columns of $Q'$ must have come from the same $P_i.$
\end{definition} 

Note the No Shuffling Property does not have any conditions on $r$.  The results
should hold for every $r$ (and we can of course, stop at $r = \kappa$).
Also, note that if we prove the No Shuffling Property for a Markov model $\mu$
just for trees with $3$ leaves, this will prove that the property
also holds for trees with $n$ leaves with $n \geq 3$ as well.
Otherwise, there is some $n$, and trees $T_1, \ldots, T_r$ and generic distributions
$P_i \in M^\mu_{T_i}$, where we form the matrix 
\[
Q = \begin{pmatrix} \pi_1 \Flat_{[n-1]|n}(P_1) & \pi_2 \Flat_{[n-1]|n}(P_2)  & \cdots & \pi_r\Flat_{[n-1]|n}(P_r)
\end{pmatrix}
\]
where $\pi \in \Delta_r$ be generic.  Then we can 
form a new matrix $Q'$ by taking $\kappa$ columns of $Q$, and let $P$
be the resulting tensor.  If $P \in M^\mu_T$ for some $T$, then
we can marginalize to tree leaves and the same statement will be true for
three leaves as well, contradicting that we had the No Shuffling Property
for $n = 3$.  Hence, for the rest of this section, we will restrict to
three leaf trees.

To show the No Shuffling Property, we will use phylogenetic invariants
that come from the general Markov model, specifically certain 
commutation invariants described in  \cite{Allman2003, Strassen1983}. 

\begin{theorem}\label{thm:tensorinve}
    Let $P$ be a $\kappa \times \kappa \times \kappa$ tensor
    giving a distribution from the general Markov model with $\kappa$ states
    on a $3$ leaf tree with $\kappa > 2$.  For $i = 1, \ldots, \kappa $
    let $P_{(i)}$ be the $\kappa \times \kappa$ matrix slice $P_{(i)} = ( P(i,u,v))_{u,v}$.
    Then
    \[
    P_{(i)}  (\mathrm{adj}(P_{(j)}) ) P_{(k)} - P_{(k)}  (\mathrm{adj}(P_{(j)}) ) P_{(i)}
    \]
    for any $i,j,k$.
\end{theorem}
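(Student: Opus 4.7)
The plan is to reduce the claimed identity to the trivial fact that diagonal matrices commute, by exploiting the tensor triple-product decomposition of a 3-leaf tree distribution described in Section \ref{tensor}. Since $P$ comes from the general Markov model on a 3-leaf tree, I can root at the interior trivalent vertex and write
\[
P(i,u,v) = \sum_{\ell = 1}^{\kappa} \pi_\ell\, M_1(\ell,i)\, M_2(\ell,u)\, M_3(\ell,v),
\]
where $\pi$ is the root distribution and $M_1, M_2, M_3$ are the transition matrices from the root to each of the three leaves. Reading off the $i$th slice, this gives a clean factorization
\[
P_{(i)} = M_2^T D_i M_3, \qquad D_i := \mathrm{diag}\bigl(\pi_1 M_1(1,i),\ldots,\pi_\kappa M_1(\kappa,i)\bigr).
\]

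Next I would restrict to the dense open set of parameters where $M_2$, $M_3$, and each $D_j$ are invertible. Using $\mathrm{adj}(A) = \det(A)\, A^{-1}$ together with the multiplicativity $\mathrm{adj}(AB) = \mathrm{adj}(B)\mathrm{adj}(A)$ for invertible matrices, I obtain
\[
\mathrm{adj}(P_{(j)}) = \det(P_{(j)})\, M_3^{-1} D_j^{-1} (M_2^T)^{-1}.
\]
Substituting this into $P_{(i)}\,\mathrm{adj}(P_{(j)})\,P_{(k)}$, the inner factors $M_3 \cdot M_3^{-1}$ and $(M_2^T)^{-1} \cdot M_2^T$ cancel, leaving
\[
P_{(i)}\,\mathrm{adj}(P_{(j)})\,P_{(k)} = \det(P_{(j)})\, M_2^T\, D_i D_j^{-1} D_k\, M_3,
\]
and symmetrically $P_{(k)}\,\mathrm{adj}(P_{(j)})\,P_{(i)} = \det(P_{(j)})\, M_2^T\, D_k D_j^{-1} D_i\, M_3$. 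Because $D_i, D_j^{-1}, D_k$ are all diagonal they commute, so $D_i D_j^{-1} D_k = D_k D_j^{-1} D_i$ and the two expressions coincide on the generic locus.

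Finally, since both sides of the claimed identity are polynomials in the entries of $P$, and the model is parametrized by a polynomial map, vanishing of the difference on a Zariski dense open subset of parameter space forces it to vanish on the entire image. The main obstacle is really just the initial bookkeeping for the slice factorization and the correct placement of transposes; once $P_{(i)} = M_2^T D_i M_3$ is set up with the $D_i$'s differing only in a single index, the cancellation is automatic. The hypothesis $\kappa > 2$ appears to play no role in the algebraic argument itself and enters only to rule out the degenerate situation where $\mathrm{adj}$ of a $2 \times 2$ matrix is too simple for the identity to carry nontrivial content; I would confirm this by inspecting the $\kappa = 2$ case at the end.
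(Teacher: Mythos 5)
Your proof is correct. Note that the paper does not actually prove this statement---it is quoted with citations to Allman--Rhodes and Strassen---so there is no internal proof to compare against; your argument is exactly the standard one underlying those references. The key steps are all sound: the slice factorization $P_{(i)} = M_2^T D_i M_3$ with $D_i = \mathrm{diag}(\pi_\ell M_1(\ell,i))$ isolates the dependence on $i$ in a diagonal factor, the cancellation on the invertible locus reduces the claim to commutativity of diagonal matrices, and since the difference of the two triple products is a polynomial in the entries of $P$ (hence, after composing with the polynomial parametrization, a polynomial in the model parameters) its vanishing on a dense open set of parameters forces vanishing on the entire model and its closure. Two minor remarks: the appeal to $\mathrm{adj}(AB) = \mathrm{adj}(B)\,\mathrm{adj}(A)$ is not needed, since $\mathrm{adj}(P_{(j)}) = \det(P_{(j)})\,P_{(j)}^{-1}$ on the invertible locus already gives the displayed formula directly from the factorization of $P_{(j)}$; and your closing observation about $\kappa > 2$ is accurate---for $\kappa = 2$ any choice of $i,j,k$ repeats an index, so the expression vanishes identically (e.g.\ via $P_{(j)}\,\mathrm{adj}(P_{(j)}) = \det(P_{(j)})\,I$) and the invariant carries no information, which is why the hypothesis appears in the statement even though the algebra never uses it.
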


 Note that $\mathrm{adj}(A)$ denotes the classical adjoint of $A$, which is $\det(A) A^{-1}$ for an invertible matrix, and is well-defined even for singular matrices
 (since the entries are all polynomials in the entries of $A$).

 \begin{lemma}\label{lem:noshuff}
     Let $\mu$ be any phylogenetic Markov model on $\kappa = 4$ states that
     contains the Jukes-Cantor model.  Then $\mu$ satisfies the No Shuffling Property.
 \end{lemma}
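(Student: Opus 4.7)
The plan is to reduce to $n = 3$ by marginalization (so every $T_i$ is the unique three-leaf star tree) and then exploit the commutation invariants of Theorem \ref{thm:tensorinve} together with the presence of JC inside $\mu$. If the No Shuffling Property were to fail for some $n \geq 3$, marginalizing over all but three leaves produces a corresponding failure for $n = 3$: marginalization commutes with column-slicing of $Q$ (it only sums rows) and preserves membership in the model, so it suffices to work in the three-leaf star-tree setting.

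Suppose for contradiction that $P$ is built from $\kappa = 4$ columns of $Q$ indexed by pairs $(i_l, j_l)$ with not all $i_l$ equal, and that $P \in M^\mu_T$ for some star tree $T$. Because $\mu$ is contained in the general Markov model, Theorem \ref{thm:tensorinve} forces
\[
P_{(l)} \mathrm{adj}(P_{(m)}) P_{(n)} = P_{(n)} \mathrm{adj}(P_{(m)}) P_{(l)}
\]
for every $l, m, n$, and each slice has the explicit form $P_{(l)} = (\pi_{i_l}/\kappa)\, M_{1, i_l}^T D_{j_l, i_l} M_{2, i_l}$ with $D_{j_l, i_l} = \mathrm{diag}(M_{3, i_l}(\cdot, j_l))$. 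In the principal case I pick distinct indices $l, m, n$ with $i_l = i_n$, $j_l \neq j_n$, and $i_m \neq i_l$; after cancelling the common, generically invertible factors $M_{1, i_l}^T$ on the left and $M_{2, i_l}$ on the right, the commutation relation reduces to $D_{j_l} X D_{j_n} = D_{j_n} X D_{j_l}$, where $X = M_{2, i_l}\, \mathrm{adj}(M_{1, i_m}^T D_{j_m, i_m} M_{2, i_m})\, M_{1, i_l}^T$. Entrywise this forces $X(a, b) = 0$ whenever the $2 \times 2$ minor of $M_{3, i_l}$ on rows $\{a, b\}$ and columns $\{j_l, j_n\}$ is nonzero, but $X$ depends on parameters of $P_{i_m}$ that are independent of those entering $D_{j_l}$ and $D_{j_n}$, so $X$ is generically non-diagonal. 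The configuration with all four $i_l$ distinct is handled identically using any triple of distinct indices, producing a non-trivial matrix identity between three independent generic slices.

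The remaining degenerate configurations, in which $i_l = i_n$ forces $j_l = j_n$ and two slices coincide, are treated by a structural argument: coincident slices force the reconstructed matrix $N_3$ of the putative decomposition $P = [N_1, N_2, N_3]$ to have equal columns, and for JC the only such matrix is the uniform one, which makes all four slices equal and hence (for generic mixing weights and distinct distributions) impossible. Since $\mu$ contains JC and all the obstructions above are polynomial in the model parameters, it is enough to verify non-vanishing of the relevant commutation or degeneracy polynomials at one convenient JC parameter point; the locus where they vanish is Zariski-closed, so a single witness propagates to a Zariski-open dense set of $\mu$-parameters. The main obstacle is the case-analysis bookkeeping, particularly confirming that in each nontrivial multiset pattern of $(i_1, i_2, i_3, i_4)$ some commutation relation (or structural obstruction) is a genuinely non-vacuous polynomial; the cleanest verification is done in Fourier coordinates on JC, where the slices are sparse and product-structured and the commutation identities reduce to explicit low-degree polynomial relations that can be checked directly.
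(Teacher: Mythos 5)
Your overall route coincides with the paper's: reduce to $n=3$ by marginalization, observe that $M^\mu_T\subseteq M^{GMM}_T$ forces any shuffled tensor lying in the model to satisfy the commutation invariants of Theorem \ref{thm:tensorinve}, and then exhibit non-vanishing at a single Jukes--Cantor parameter point so that Zariski-closedness propagates the failure to generic parameters. The paper carries out the last step purely computationally (random JC tensors, evaluate the invariants, observe non-vanishing unless all slices share a source); your principal-case reduction --- cancelling the common invertible factors $M_{1,i_l}^T$ and $M_{2,i_l}$ to get $D_{j_l}XD_{j_n}=D_{j_n}XD_{j_l}$ and reading off that the nonzero $2\times 2$ minors of $M_{3,i_l}$ force off-diagonal entries of $X$ to vanish --- is a more explicit and informative version of that same check, and it is correct; for the all-distinct-source pattern you, like the paper, ultimately rely on an evaluation at a JC witness, which still has to be actually performed.

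The genuine gap is your treatment of the degenerate patterns, namely when the four chosen columns consist of only two distinct matrices $A\neq B$ coming from different components, with multiplicities $(3,1)$ or $(2,2)$. Your structural argument fails on two counts: deducing that the putative third factor $N_3$ has two equal columns requires the other factors of that (unknown) decomposition to be invertible, which you cannot assume; and the step ``for JC the only such matrix is the uniform one'' is beside the point, because membership is asserted in $M^\mu_T$ where $\mu$ is any supermodel of JC --- in particular $\mu$ may be the general Markov model itself, whose transition matrices may freely have repeated columns. Worse, no invariant-plus-witness argument can dispose of these patterns at all: every commutation relation among a triple of the form $\{A,A,B\}$ vanishes identically, since $A\,\mathrm{adj}(A)\,B=\det(A)\,B=B\,\mathrm{adj}(A)\,A$; moreover the assembled tensor has third multilinear rank $2$, hence border rank at most $4$, so it lies in the Zariski closure of the general Markov model and satisfies \emph{every} GMM invariant. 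Thus for $\mu=GMM$ there is no polynomial obstruction to propagate, and excluding these configurations would require a semialgebraic argument about the statistical model itself. (The paper's own verification only treats distinct slices, and in the one place the property is used, Lemma \ref{lem:commontripart}, the $\kappa$ selected rows are distinct, so your principal and all-distinct cases are the ones that matter; but as a proof of the property as literally stated, with repeated columns allowed, your degenerate-case argument does not go through.)
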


\begin{proof}
  We proved this result via a computation, and using the phylogenetic invariants
  from the general Markov model described in Theorem \ref{thm:tensorinve}.
  Specifically, we generated three random tensors from the Jukes-Cantor model
  parameterization  $P^1, P^2, P^3$.     Then from those $ 4\times 4 \times 4$ tensors,
  we choose three matrix slices, and evaluate them in the polynomials from 
  Theorem \ref{thm:tensorinve}.  We find that that this evaluates to zero only if
  the three matrix slices come from the same $P^i$.   

  The fact that the invariants from the General Markov Model evaluate to 
  zero only when all three slices come from the same $P^i$ proves that all
  models that contain the Jukes-Cantor model will satisfy the No Shuffling Property.
  Indeed, if the No Shuffling Property is not satisfied for some model
  $\mu$, then for all generic distributions from that model,
  there will be some choice of slices from different $P^1, P^2, P^3$ that satisfy the invariants that come
  from that model.
  The fact that the invariants vanish for the generic distributions in the model,
  implies that they vanish for all distributions that come from the model.
  Since the General Markov Model contains every phylogenetic model,
  the invariants for the General Markov model are contained in the 
  ideal of invariants for any other model.  Since we have produced a distribution
  in the Jukes-Cantor model that does not satisfy the phylogenetic invariants
  for the General Markov model with any non-standard set of slices,
  this proves that no model that contains the Jukes-Cantor model will satisfy
  those invariants for General Markov model.  Hence any model containing
  the Jukes-Cantor model satisfies the No Shuffling Property.     
\end{proof}

With the tools from Sections \ref{sec:rankproperty} and \ref{sec:noshuffle} in
hand, we are ready to apply Theorem \ref{thm:main} to deduce Corollary \ref{cor:JCetc}.

\begin{proof}[Proof of Corollary \ref{cor:JCetc}]
According to Theorem \ref{thm:main}, to deduce Corollary \ref{cor:JCetc}, 
we must show that all of the models JC, K2P, J3P, SSM, GMM:
\begin{enumerate}
\item  are standard Markov models,
\item  satisfy the rank property $RP(2^k-k, k)$, 
\item  satisfy the extended rank property $ERP(2^k-k, k)$,  and
\item  satisfy the No Shuffling Property.
\end{enumerate}

The fact that all the models are standard  is the content of Lemma \ref{lem:JCnontrivial}.

The JC model satisfies the rank property $RP(2^k-k, k)$ by Lemma \ref{lem:rpJC}.
Any supermodel of the JC model will hence satisfy $RP(2^k-k, k)$ as well.
The JC model satisfies the extended rank property $ERP(2^k-k, k)$
by Lemma \ref{lem:erpJC}.  Any supermodel of the JC model will hence satisfy 
$ERP(2^k-k, k)$ as well.  Finally, Lemma \ref{lem:noshuff} shows
that all the models that contain JC satisfy the No Shuffling Property.    
\end{proof}

\section*{Acknowledgments}
This material is based in part upon work supported by the National Science Foundation under Grant No.~DMS-1929284 while the authors were in residence at the Institute for Computational and Experimental Research in Mathematics in Providence, RI, during the semester program on
``Theory, Methods, and Applications of Quantitative Phylogenomics''.

\bibliographystyle{plain}
\bibliography{JC}

\end{document}